\documentclass[11pt]{article}

\usepackage{fullpage,amsmath,xspace,amssymb,epsfig,stmaryrd,bbm,hyperref,enumerate}
\usepackage[usenames,dvipsnames]{color}
\usepackage{epstopdf}

\pagenumbering{arabic} \pagestyle{plain}

\newtheorem{Thm}{Theorem}
\newtheorem{Lem}[Thm]{Lemma}
\newtheorem{Cor}[Thm]{Corollary}

\newtheorem{Def}{Definition}
\newtheorem{Fact}[Thm]{Fact}
\newenvironment{proof}{\noindent {\textbf{Proof }}}{$\Box$ \medskip}

\newcommand\mbR{\mbox{$\mathbb{R}$}}

\newcommand\mbF{\mbox{$\mathbb{F}$}}

\newcommand\B{\{0,1\}}
\newcommand\Bn{\{0,1\}^n}
\newcommand\BntB{\{0,1\}^n\rightarrow \{0,1\}}

\newcommand\pmB{\{+1,-1\}}
\newcommand\pmBn{\{+1,-1\}^n}

\newcommand {\ie} {\textit{i.e.}\xspace}

\newcommand\defeq{\stackrel{\mathrm{\scriptsize def}}{=}}

\newcommand\mob{M\"{o}bius\xspace}

\newcommand\fn[2]{\| \hat{#1} \|_{#2}}

\newcommand\mono{\mbox{\tt mono}\xspace}

\newcommand\alice{\mbox{\sf Alice}\xspace}
\newcommand\bob{\mbox{\sf Bob}\xspace}

\newcommand\gf{\mbox{$\mbF_2$}\xspace}

\newcommand\sen{\mbox{$\sf s$}} 
\newcommand\bs{\mbox{$\sf bs$}} 
\newcommand\C{\mbox{$\sf C$}} 
\newcommand\Cmin{\mbox{$\sf C_{\min}$}} 
\newcommand\Cminc{\mbox{$\sf C_{\min}^{\clo}$}} 
\newcommand\Cpminc{\mbox{$\sf C_{\oplus,\min}^{\clo}$}} 
\newcommand\cover{\mbox{$\sf Cover$}} 
\newcommand\dt{\mbox{\sf {DT}}\xspace} 
\newcommand\pdt{\mbox{\sf {PDT}}\xspace} 
\newcommand\dcc{\mbox{\sf {CC}}\xspace} 
\newcommand\ncc{\mbox{\sf {N}}\xspace} 


\newcommand\alt{{\sf alt}} 

\newcommand\dist{\mbox{\sf dist}\xspace}

\newcommand\rank{\mbox{\sf rank}\xspace}

\newcommand\clo{\textup {clo}\xspace}

\newcommand\fand{\mbox{\sf AND}\xspace}
\newcommand\for{\mbox{\sf OR}\xspace}
\newcommand\fxor{\mbox{\sf XOR}\xspace}

\newcommand\fcand{f\circ \wedge}
\newcommand\fcxor{f\circ \oplus}

\newcommand\poly{\textrm{poly}}


\begin{document}
	\title{\bf Sensitivity Conjecture and Log-rank Conjecture for functions with small alternating numbers}
	\author{Chengyu Lin \qquad Shengyu Zhang}
	\setcounter{page}{0}
	\maketitle

\begin{abstract}
	The Sensitivity Conjecture and the Log-rank Conjecture are among the most important and challenging problems in concrete complexity. Incidentally, the Sensitivity Conjecture is known to hold for monotone functions, and so is the Log-rank Conjecture for $f(x \wedge y)$ and $f(x\oplus y)$ with monotone functions $f$, where $\wedge$ and $\oplus$ are bit-wise \fand and \fxor, respectively. In this paper, we extend these results to functions $f$ which alternate values for a relatively small number of times on any monotone path from $0^n$ to $1^n$. These deepen our understandings of the two conjectures, and contribute to the recent line of research on functions with small alternating numbers.
\end{abstract}

\newpage
\section{Introduction}
A central topic in Boolean function complexity theory is relations among different combinatorial and computational measures \cite{Juk12}.
For Boolean functions, there is a large family of complexity measures such as block sensitivity, certificate complexity, decision tree complex (including its randomized and quantum versions), degree (including its approximate version), etc, that are all polynomially related \cite{BdW02}. One outlier\footnote{There are complexity measures, such as \gf-degree, polynomial threshold degree, total influence, Boolean circuit depth, CNF/DNF size, that are \textit{known} not to belong to the polynomially equivalent class. 
But the position of sensitivity is elusive.} is sensitivity, which {a priori} could be exponentially smaller than the ones in that family. The famous Sensitivity Conjecture raised by Nisan and Szegedy \cite{NS94} says that sensitivity is also polynomially related to the block sensitivity and others in the family. Despite a lot of efforts, the best upper bound we know is still exponential: $\bs(f) \le \C(f) \le \max \left(2^{\sf{s}(f) - 1} \left(\sen(f) - \frac{1}{3}, \sen(f)\right) \right)$ from~\cite{APV15}, improving upon previous work \cite{Sim83,ABGMSZ14}. See a recent survey \cite{HKP11} about this conjecture and how it has resisted many serious attacks. 


Communication complexity quantifies the minimum amount of communication required for computing functions whose inputs are distributed among two or more parties~\cite{KN97}. In the standard bipartite setting, the function $F$ has two inputs $x$ and $y$, with $x$ given to \alice and $y$ to \bob. The minimum number of bits needed to be exchanged to compute $F(x,y)$ for all inputs $(x,y)$ is the communication complexity $\dcc(F)$. It has long been known \cite{MS82} that the logarithm of the rank of communication matrix $M_F \defeq [F(x,y)]_{x,y}$ is a lower bound of $\dcc(F)$. Perhaps the most prominent and long-standing open question about communication complexity is the Log-rank Conjecture proposed by Lov{\'a}sz and Saks~\cite{LS88}, which asserts that $\dcc(F)$ of any Boolean function $F$ is also upper bounded by a polynomial of $\log \rank(M_F)$. The conjecture has equivalent forms related to chromatic number conjecture in graph theory \cite{LS88}, nonnegative rank \cite{Lov90}, Boolean roots of polynomials over real numbers \cite{Val04}, quantum sampling complexities \cite{ASTS+03,Zha12}, etc. Despite a lot of efforts devoted to the conjecture in the past decades, and the best upper bound is $\dcc(F)=O\big(\sqrt{\rank(M_F)}\log \left(\rank(M_F)\right)\big)$ by Lovett~\cite{Lov14}, which is still exponentially far from the target.

While these two conjectures are notoriously challenging in their full generality, special classes of functions have been investigated. In particular, the Sensitivity Conjecture is confirmed to hold for monotone functions, as the sensitivity coincides with block sensitivity and certificate complexity for those functions \cite{Nis91}. The Log-rank Conjecture is not known to be true for monotone functions, but it holds for monotone functions on two bit-wise compositions between $x$ and $y$. More specifically, two classes of bit-wise composed functions have drawn substantial attention. The first class contains AND functions $F = \fcand$, defined by $F(x,y) = f(x\wedge y)$, where $\wedge$ is the bit-wise \fand of $x,y\in \Bn$. Taking the outer function $f$ to be the $n$-bit \for, we get Disjointness, the function that has had a significant impact on both communication complexity theory itself \cite{She14} and applications to many other areas such as streaming, data structures, circuit complexity, proof complexity, game theory and quantum computation \cite{CP10}. The AND functions also contain 
other well known functions such as Inner Product, AND-OR trees \cite{JKR09,LS10,JKZ10,GJ15}, and functions exhibiting gaps between communication complexity and log-rank \cite{NW95}. 
The second class is XOR functions $F = \fcxor$, defined by $F(x,y) = f(x\oplus y)$, where $\oplus$ is the bit-wise XOR function. This class includes Equality \cite{Yao79,NS96,Amb96,BK97,BCWdW01} and Hamming Distance \cite{Yao03,GKdW04,HSZZ06,LLZ11,LZ13} as special cases. 

Both AND and XOR functions have recently drawn much attention \cite{LS93,BdW01,ZS09,LZ10,MO10,SW12,LZ13,TWXZ13,Zha14,ODWZST14,Yao15}, partly because their communication matrix rank has intimate connections to the polynomial representations of the outer function $f$. Specifically, the rank of $M_{\fcand}$ is exactly the \mob sparsity\footnote{Named after the \mob transform from $f$ to $\alpha$.} $\mono(f)$, the number of nonzero coefficients $\alpha(S)$ in the multilinear polynomial representation $f(x) = \sum_{S\subseteq[n]} \alpha(S) \prod_{i\in S}x_i$ for $f:\BntB$ \cite{BdW01}. And the rank of $M_{\fcxor}$ is exactly the Fourier sparsity $\fn{f}{0}$, the number of nonzero Fourier coefficients $\hat f(S)$ in the multilinear polynomial representation $f(x) = \sum_{S\subseteq[n]} \hat f(S) \prod_{i\in S}x_i$ for $f:\pmB^n\to\B$. 

It is known that the Log-rank Conjecture holds for these two classes of functions when the outer function $f$ is monotone \cite{LS93,MO10}, and this work aims to extend these as well as the sensitivity result on monotone functions, to functions that are \textit{close to} being monotone. One needs to be careful about the distance measure here, since the widely-used (e.g. in property testing and computational learning) normalized Hamming distance $\dist(f,g) = \Pr_{x\in \Bn}[f(x) \ne g(x)]$ does not meet our requirement. Indeed, if we flip the value $f(x)$ at just one input $x$, then this changes $f$ by an exponentially small amount measured by $\dist$, but the sensitivity would change from a small $\sen(f)$ to a large $n-\sen(f)$. Similarly, the Fourier sparsity is also very sensitive to local changes ($\fn{f}{0}$ to $2^n - \fn{f}{0}$), and so is \mob sparsity if we flip the value at $0^n$.

One robust distance measure to monotone functions, which has recently drawn an increasingly amount of attention, is the alternating number, defined as follows. View the Boolean hypercube $\Bn$ as a lattice with the partial order $x\preceq y$ if $x_i \le y_i$ for all $i$. A path $x^{(1)} \rightarrow \cdots \rightarrow x^{(k)}$ on $\Bn$ is monotone if $x^{(i)} \prec x^{(i+1)}$ for all $i$. 
The alternating number of a function $f$ on $\Bn$ is the maximum number of $i$'s with $f(x^{(i)}) \ne f(x^{(i+1)})$, on any monotone path $x^{(0)} \rightarrow \cdots \rightarrow x^{(n)}$ from $0^n$ to $1^n$. 
It is clear that constant functions have alternating number 0, and monotone functions have alternating number 1. For general functions $f$, we have $\alt(f) \le n$, thus $\alt(f)$ is a sub-linear complexity measure. 
The smaller $\alt(f)$ is, the closer it is to monotone functions. Studies of the alternating number dated back to 
\cite{Mar58}, in which Markov showed that the inversion complexity, the minimum number of negation gates needed in any Boolean circuit computing $f$, is exactly $\lceil \log_2(\alt(f) + 1) \rceil$. Late work investigated the inversion complexity/alternating number over computational models such as  constant-depth circuit~\cite{SW93}, bounded-depth circuit~\cite{ST03}, Boolean formula~\cite{Mor09a}, and non-deterministic circuit~\cite{Mor09b}. It is recently showed that small alternating number can be exploited in learning Boolean circuits~\cite{BCOST14}. Also there are some studies in cryptography considering the effect of negation gates~\cite{GMOR15}.

\medskip In this paper, we study the Sensitivity and Log-rank Conjectures for functions whose alternating numbers are small, compared to sensitivity, \mob sparsity and Fourier sparsity. First, the following theorem shows that the Sensitivity Conjecture holds for $f$ with $\alt(f) = \poly(\sen(f))$.
\begin{Thm}\label{thm:alt-sen}
	For any function $f:\BntB$, it holds that \[\bs(f) = O(\alt(f)^2\cdot \sen(f)).\]
\end{Thm}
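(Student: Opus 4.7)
The plan is to extend Nisan's classical argument giving $\bs(f) = \sen(f)$ for monotone $f$, using the bounded alternating number to absorb the non-monotonicity. Fix an input $x$ achieving $\bs(f,x) = \bs(f) = k$ and let $B_1, \dots, B_k$ be pairwise disjoint minimal sensitive blocks at $x$. For each $B_i$, split $B_i = B_i^0 \sqcup B_i^1$ according to whether the coordinate has value $0$ or $1$ in $x$, and set $y_i := x \wedge x^{B_i}$ and $z_i := x \vee x^{B_i}$, so that $y_i$ is $x$ with $B_i$ zeroed out, $z_i$ is $x$ with $B_i$ set to all ones, and both $x$ and $x^{B_i}$ lie in the $|B_i|$-dimensional subcube $[y_i, z_i]$ with $f(x) \ne f(x^{B_i})$.

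First I would extract, for each block, a sensitive edge lying on a monotone path inside $[y_i, z_i]$. Consider the two natural monotone paths $y_i \to x \to z_i$ (flip the $B_i^1$ coordinates upward, then the $B_i^0$ coordinates) and $y_i \to x^{B_i} \to z_i$ (flip $B_i^0$ first, then $B_i^1$). A short case analysis on the four values $f(y_i), f(x), f(x^{B_i}), f(z_i)$, using only that $f(x) \ne f(x^{B_i})$, shows that at least one of these two paths must contain a sensitive edge. By a pigeonhole over the $k$ blocks, at least half of them place their sensitive edge on a coordinate in $B_i^0$ (the ``up-heavy'' case; the $B_i^1$ case is symmetric). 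This gives, for each such block, a sensitive pair $(w_i, j_i)$ with $w_i \in [y_i, z_i]$ and $j_i \in B_i^0$.

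Because the $B_i$ are pairwise disjoint, so are the $B_i^0$, and hence the coordinates $j_i$ are distinct. Grouping the sensitive pairs by the anchor $w_i$ then gives $k/2 \le \sum_{w \in W} \sen(f, w) \le |W| \cdot \sen(f)$, where $W$ is the set of distinct values taken by the $w_i$'s. The proof therefore reduces to showing $|W| = O(\alt(f)^2)$.

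To establish this, I would embed each local monotone path $y_i \to z_i$ into a full monotone path from $0^n$ to $1^n$ and invoke the hypothesis $\alt(f) \le a$, which caps the number of sensitive edges on any such path at $a$. The two factors of $\alt(f)$ then arise as follows: (i) at most $\alt(f)$ distinct $w_i$'s can be charged to any fixed monotone $0^n \to 1^n$ path, since each requires its own sensitive edge; and (ii) a layering argument that partitions the blocks according to where their sensitive edges fall along a reference monotone path (for instance, by the index of the alternation at which the edge occurs) shows that $O(\alt(f))$ canonical path extensions through $x$ suffice to cover all the $w_i$'s. The main technical obstacle I anticipate is the combinatorial implementation of step (ii): the path extensions must be orchestrated so that the minimality of each $B_i$ and the pairwise disjointness of the blocks are simultaneously respected, while keeping the charging tight enough to avoid an extra factor of $\alt(f)$ beyond what is claimed.
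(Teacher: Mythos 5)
The block decomposition $B_i = B_i^0\sqcup B_i^1$ matches the paper's split of each block into $D_i$ and $U_i$, and your extraction of a sensitive edge inside $[y_i,z_i]$ is sound: by minimality of $B_i$ one has $f(y_i)=f(z_i)=f(x)\ne f(x^{B_i})$, so the path $y_i\to x^{B_i}\to z_i$ alternates and must contain such an edge (the degenerate cases where $B_i^0$ or $B_i^1$ is empty also check out). The reduction to bounding $|W|$, the set of distinct anchors, via $k/2\le\sum_{w\in W}\sen(f,w)\le|W|\cdot\sen(f)$, is likewise fine.

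The gap is step (ii), the claim that $|W|=O(\alt(f)^2)$. The alternating-number hypothesis caps the number of your sensitive edges lying on any \emph{single} monotone $0^n\to 1^n$ path at $\alt(f)$, but this gives nothing unless the anchors can be threaded by $O(\alt(f))$ monotone paths, and there is no reason for that. For $i\ne j$, the anchor $w_i$ agrees with $x$ outside $B_i$ while $w_j$ agrees with $x$ outside $B_j$; comparing them, on $B_i$ one pits $w_i$ against $x$ and on $B_j$ one pits $x$ against $w_j$, and generically these pull in opposite directions, so $w_i$ and $w_j$ are incomparable. An antichain of anchors forces $\Omega(|W|)$ covering paths, not $O(\alt(f))$, so the path-covering argument yields no bound on $|W|$ at all; the ``canonical path extensions through $x$'' you invoke do not escape this, because an anchor on the path $y_i\to x^{B_i}\to z_i$ need not be comparable to $x$. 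The paper's proof sidesteps the issue entirely: after first proving $\max\{\C(f,0^n),\C(f,1^n)\}\le\alt(f)\cdot\sen(f)$ by induction on $\alt(f)$, it counts, in the hard case where $x$ lies under no max term and over no min term, the pairs $(i,j)$ with $f(x^{T_i\cup B_j})\ne f(x^{T_i})$ for $T_i\in\{U_i,D_i\}$, bounds each count by the certificate complexity at $x^{B_j}$ of a restricted function via the lemma, and then averages over the resulting $2\ell\times\ell$ matrix to find a single shift $x^{T_i}$ that retains most block sensitivity and falls into the easy (inductive) case. If you want to rescue your plan, the per-block sensitive edge has to be re-charged to a local certificate at a shifted input rather than to a global monotone path.
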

Note that if a function is non-degenerate in the sense that it depends on all $n$ variables, then the sensitivity is at least $\Omega(\log n)$ \cite{Sim83}, therefore the above theorem also confirms the Sensitivity Conjecture for non-degenerate functions $f$ with $\alt(f) = \poly \log n$. 

The next two theorems confirmed the Log-rank Conjecture for $\fcxor$ with $\alt(f) = \poly\log(\fn{f}{0})$, and for $\fcand$ with $\alt(f) = O(1)$.
\begin{Thm}\label{thm:alt-fs}
	For any function $f:\BntB$, it holds that \[\dcc(\fcxor) \le  2\cdot \alt(f)\cdot \log^2\rank(M_{\fcxor}).\]
\end{Thm}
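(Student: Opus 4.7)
Let $s = \rank(M_{\fcxor}) = \fn{f}{0}$ and $a = \alt(f)$. The plan is to construct a parity decision tree for $f$ of depth at most $a\log^2 s$ and then invoke the standard PDT-to-communication simulation: any $\pdt$ of depth $d$ yields a $\dcc(\fcxor)$-protocol of cost $2d$, because each parity query on $x\oplus y$ costs one bit from \alice (sending her parity of $x$) and one bit from \bob (sending his parity of $y$). This will deliver the target bound $\dcc(\fcxor) \le 2a\log^2 s$.

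The argument proceeds by induction on $a$. The base cases $a=0$ (constant $f$) and $a=1$ (monotone $f$) are handled trivially and by the known Log-rank Conjecture for XOR functions with monotone outer function \cite{MO10,LS93}, respectively, which delivers a PDT of depth $O(\log^2 s)$. For the inductive step with $a\ge 2$, assume without loss of generality $f(0^n) = 0$ (else negate $f$, which leaves both $\alt$ and $\fn{f}{0}$ unchanged), and consider the monotone hull
\[
g(x) \;:=\; \mathbbm{1}\bigl[\exists\, y\preceq x\colon f(y) = 1\bigr],
\]
so that $f\le g$ pointwise. Set $h := g - f \in \{0,1\}$. A path-by-path check shows $h(0^n) = 0$ and $\alt(h) \le a-1$: on any monotone path from $0^n$ to $1^n$, the first $0\to 1$ transition of $f$ occurs exactly where the path enters $\{g=1\}$, at which point $h$ remains $0$; every subsequent transition of $f$ inside $\{g=1\}$ corresponds bijectively to a transition of $h$. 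The identity $f = g - h$ then produces a PDT for $f$ by concatenating a PDT for $g$ (monotone, hence depth $O(\log^2 \fn{g}{0})$ by the base case) with a PDT for $h$ (alternating number $a-1$, hence depth $(a-1)\log^2 \fn{h}{0}$ by induction) and combining the outputs at the leaves.

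\textbf{Main obstacle.} To close the recursion at $a\log^2 s$, one needs $\fn{g}{0}$ and $\fn{h}{0}$ bounded by $s$ (or by $\poly(s)$ without spoiling the $\log^2$ exponent). Small examples like $f(x) = x_1\oplus x_2$ already show that $\fn{g}{0}$ can strictly exceed $\fn{f}{0}$, so a pointwise inequality is unavailable; the proof will therefore need a Fourier-theoretic handle on the monotone hull --- perhaps writing $g$ as a bounded combination of low-weight projections of $f$, or exploiting granularity of $\hat f$ --- that controls $\fn{g}{0}$ mildly enough to be absorbed into the $\log^2$ factor. An alternative route is to replace the global decomposition by an affine-restriction step: run the monotone PDT for $g$ first, and on each surviving leaf where $g=1$ the input lies in an affine subspace $V$ on which $\wfn{f|_V}{0} \le s$ automatically, since restriction to an affine subspace of \gfn\ only groups Fourier coefficients into cosets and so can only decrease sparsity; the difficulty then shifts to giving a meaningful version of $\alt(f|_V) \le a-1$, because $V$ is not in general a sublattice of $\Bn$. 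In either route the technical heart is a combinatorial-Fourier bridge between the sparsity $s$ and the monotone peeling operation, and this is the step I expect to be the main obstacle.
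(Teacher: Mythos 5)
Your reduction to a parity decision tree of depth $a\log^2 s$ and the $2d$-bit simulation are both sound, and the obstacle you flag is real: there is no a priori control on $\fn{g}{0}$ for the monotone hull $g$ in terms of $\fn{f}{0}$, and the affine-restriction variant fails because an affine subspace of $\gfn$ is not a sublattice of $\Bn$, so $\alt(f|_V)$ is not even well-defined. As written the recursion does not close; this is a genuine gap, not a deferred technicality.

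The paper sidesteps the hull entirely. It proves (Lemma~\ref{thm:Cmin-alt}, part~2, by induction on $\alt$) that $\max\{\C(f,0^n),\C(f,1^n)\}\le\alt(f)\cdot\deg_2(f)$. The inductive step picks a max term $u$, zeroes out the coordinates $S_0(u)=\{i:u_i=0\}$ to land in the subcube under $u$ --- a subcube restriction preserves the lattice, and the restriction $f''$ there has $\alt(f'')\le\alt(f)-1$ --- and then appends a certificate for $f''$. The step playing the role of your ``main obstacle'' is bounding $|S_0(u)|$, and the ingredient your proposal is missing is: the restriction of $f$ to the subcube \emph{above} $u$ takes value $f(1^n)$ everywhere except at $u$ itself, so its number of $1$-inputs is odd, which by Fact~\ref{fact:odd-deg2} forces its $\gf$-degree to equal $|S_0(u)|$; since $\deg_2$ is downward non-increasing, $|S_0(u)|\le\deg_2(f)$. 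No Fourier sparsity of $g$ or $h$ is ever estimated. Feeding $\Cminc(f)\le\alt(f)\cdot\deg_2(f)$ into $\dt(f)\le\Cminc(f)\cdot\deg_2(f)$ (Theorem~\ref{thm:dt-deg-reduction}) gives $\dt(f)\le\alt(f)\cdot\deg_2(f)^2$, and $\dcc(\fcxor)\le 2\dt(f)$ together with $\deg_2(f)\le\log\fn{f}{0}$ finishes. In short: peeling coordinates at a max term and restricting to a subcube keeps $\alt$ meaningful and lets the odd-parity trick charge the peeled coordinates to $\deg_2(f)$ for free, whereas the decomposition $f=g-h$ pays an uncontrolled sparsity cost.
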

	
\begin{Thm}\label{thm:alt-mon}
	For any function $f:\BntB$, it holds that \[\dcc(\fcand) \le O(\log^{\alt(f)+1} \rank(M_{\fcand})).\]
\end{Thm}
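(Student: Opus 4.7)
The plan is to induct on $k = \alt(f)$. The base case $k = 0$ is trivial (constant $f$), and $k = 1$ (so that $f$ or $\neg f$ is monotone) is exactly the cited monotone Log-rank bound $\dcc(\fcand) = O(\log^2 \mono(f))$, which matches the $\alt(f) = 1$ instance of the theorem since $\rank(M_{\fcand}) = \mono(f)$.

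For the inductive step, assume the theorem for alternating numbers up to $k-1$ and let $\alt(f) = k \ge 2$. WLOG $f(0^n) = 0$ (otherwise replace $f$ by $\neg f$, which affects neither $\alt$ nor, up to $\pm 1$, $\mono$). Define the monotone hull $g : \Bn \to \B$ by $g(x) = \max_{z \preceq x} f(z)$ and set $h = g - f$, so $g$ is monotone, $g \ge f$, and $h \in \{0,1\}$. I would first verify $\alt(h) \le k - 1$: along any monotone path $p$ from $0^n$ to $1^n$, the monotone $g$ makes at most one $0 \to 1$ transition, at some index $i^*$, and the value sequence of $h$ along $p$ either loses an alternation compared to that of $f$ (when $f(x^{(i^*)}) = 1$, so the $f$-transition at $i^*$ is absorbed into $g$'s transition) or temporarily gains one (when $f(x^{(i^*)}) = 0$); in the latter case, rerouting $p$ through a minimal $z \in f^{-1}(1)$ witnessing $g(x^{(i^*)}) = 1$ exhibits an $f$-path with at least two extra alternations (one $0 \to 1$ at $z$, and one $1 \to 0$ by the time we reach $x^{(i^*)}$), contradicting $\alt(f) = k$. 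The protocol for $\fcand$ then runs the monotone protocol for $g \circ \wedge$ at cost $O(\log^2 \mono(g))$; if it outputs $0$, the players output $0$ (correct since $f \le g$); otherwise they invoke the inductive protocol for $h \circ \wedge$ (valid since $\alt(h) \le k-1$) at cost $O(\log^k \mono(h))$ and output $1 - h(x \wedge y)$.

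The main obstacle is to bound $\mono(g)$ and $\mono(h)$ in terms of $\mono(f)$ so the recursion closes at $O(\log^{k+1} \mono(f))$. From $f = g - h$, matrix-rank subadditivity only gives the circular pair $\mono(g) \le \mono(f) + \mono(h)$ and $\mono(h) \le \mono(f) + \mono(g)$. The crucial structural estimate to establish is $\mono(g) \le \mono(f)$; granting it, $\mono(h) \le 2\mono(f)$ follows from $h = g - f$, and the total cost becomes $O(\log^2 \mono(f)) + O(\log^k \mono(f)) = O(\log^{k+1} \mono(f))$, completing the induction. I would attempt $\mono(g) \le \mono(f)$ by expanding the monotone hull via inclusion-exclusion, $g(x) = 1 - \prod_{z}\bigl(1 - \prod_{i \in z} x_i\bigr)$ with $z$ ranging over the minimal elements of $f^{-1}(1)$, and embedding $\supp(\alpha_g)$ into $\supp(\alpha_f)$, exploiting that each such minimal element already contributes a nonzero \mob coefficient to both $g$ and $f$.
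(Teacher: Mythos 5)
Your decomposition $f = g - h$ (with $g$ the monotone hull and $h = g - f$) is a genuinely different route from the paper: the paper never forms a monotone approximant, but instead bounds $\cover_1(\fcand)$ directly by a recursion in $\alt(f)$ that exploits two structural facts about restrictions of $f$ itself (max terms of $f$ lie within Hamming distance $\log\rank$ of $1^n$ because the subfunction above a max term is an $\for$; and the number of min terms of $f$ is at most $\mono(f)$ because each contributes a nonzero \mob coefficient), combined with Lov\'asz's inequality $\dcc \le \log\cover_1 \cdot \log\rank$. Your observation that $\alt(h) \le \alt(f) - 1$ is correct and nicely proved, and the two-stage protocol is sound \emph{given} a rank bound for $g$.

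The gap, which you flag yourself, is fatal as stated: the inequality $\mono(g) \le \mono(f)$ is false. Take $n=3$ and $f$ the ``not all equal'' function, $f^{-1}(1) = \{100,010,001,110,101,011\}$. Then $f = x_1+x_2+x_3 - x_1x_2 - x_1x_3 - x_2x_3$, so $\mono(f)=6$, while the min terms are the three singletons, so $g = \for_3 = x_1+x_2+x_3 - x_1x_2 - x_1x_3 - x_2x_3 + x_1x_2x_3$ and $\mono(g)=7 > \mono(f)$. Your proposed embedding of $\supp(\alpha_g)$ into $\supp(\alpha_f)$ cannot exist here. More importantly, no polynomial bound $\mono(g) \le \poly(\mono(f))$ is known or obvious, and this is the real obstruction: unlike every object manipulated in the paper's proof, $g$ is \emph{not} a restriction of $f$, so $\rank(M_{g\circ\wedge})$ does not inherit the bound $\rank(M_{\fcand})$ from submatrix monotonicity of rank. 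The inclusion--exclusion expansion $g = 1 - \prod_z(1 - x^z)$ over the (at most $\mono(f)$ many) min terms $z$ a priori produces up to $2^{\mono(f)}$ monomials, and you would have to establish heavy cancellation to close the recursion. A further warning sign: if $\mono(g) = O(\mono(f))$ were true, your recursion would actually yield $\dcc(\fcand) = O_{\alt(f)}(\log^2\rank(M_{\fcand}))$ (your arithmetic ``$\log^2 + \log^k = O(\log^{k+1})$'' overcounts; it is $O(\log^{\max(2,k)})$, and unrolling gives a constant-exponent bound with an $\alt$-dependent constant), which is dramatically stronger than the theorem; such an overshoot usually signals a missing constraint, and here that constraint is precisely the uncontrolled $\mono(g)$.
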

In the last theorem, the dependence on $\alt(f)$ can be slightly improved (by a factor of 2) if a factor of $\log n$ is tolerated in the communication cost. 

\medskip
\paragraph{Related work}
The Sensitivity Conjecture has many equivalent forms, summarized in the survey \cite{HKP11}. Also see the recent paper \cite{GKS15} which tries to solve this conjecture using a communication game approach. At the other end of the spectrum, \cite{Rub95,AS11} seek the largest possible separation between sensitivity and block sensitivity. Apart from monotone functions~\cite{Nis91}, the Sensitivity Conjecture has also been confirmed on graph properties~\cite{Tur84}, cyclically-invariant function~\cite{Cha05} and read-once functions~\cite{Mor14}. Other than the conjecture itself, some recent work \cite{AV15,GNSTW15} discussed combinatorial and computational structures of low-sensitivity functions. 

For the Log-rank Conjecture, apart from the equivalent forms mentioned earlier, some seemingly weaker formulations in terms of largest monochromatic rectangle size~\cite{NW95}, randomized communication complexity and information cost~\cite{GL14} are actually equivalent to the original conjecture. For lower bounds, the best one had been $\dcc(F)=\Omega\left((\log\rank(M_F))^{\log_{3} 6}\right)$ (attributed to Kushilevitz in ~\cite{NW95}), achieved by an AND function, until the recent result of $\dcc(F)=\tilde \Omega\left(\log^2\rank(M_F)\right)$ \cite{GPW15}. 
For XOR functions $\fcxor$, the Log-rank Conjecture is confirmed when $f$ is symmetric \cite{ZS09}, monotone \cite{MO10}, linear threshold functions (LTFs)~\cite{MO10}, $AC^0$ functions~\cite{KS13}, has low \gf-degree \cite{TWXZ13} or small spectral norm \cite{TWXZ13}. For AND functions $\fcand$, it seems that the conjecture is only confirmed on monotone functions \cite{LS93}.

\section{Preliminaries}
\paragraph{$n$-bit (Boolean) functions}
We use $[n]$ to denote the set $\{1, 2, \ldots, n\}$. The all-0 $n$-bit string is denoted by $0^n$ and the all-1 $n$-bit string is denoted by $1^n$. 
 
For a Boolean function $f:\BntB$, its \gf-degree is the degree of $f$ viewed as a polynomial over \gf.
Such functions $f$ can be also viewed as polynomials over $\mbR$: $f(x) = \sum_{S\subseteq [n]} \alpha(S) x^S$, where $x^S = \prod_{i\in S} x_i$. If we represent the domain by $\pmBn$, then the polynomial (still over \mbR) changes to $f(x) = \sum_{S\subseteq [n]} \hat f(S) x^S$, usually called Fourier expansion of $f$. The coefficients $\alpha(S)$ and $\hat f(S)$ in the two \mbR-polynomial representations capture many important combinatorial properties of $f$. 
We denote by $\mono(f)$ the \mob sparsity, the number of non-zero coefficients $\alpha(S)$,  
and by $\fn{f}{0}$ the Fourier sparsity, the number of non-zero coefficients $\hat f(S)$. 
Some basic facts used in this paper are listed as follows.
\begin{Fact}\label{fact:odd-deg2}
	For any $f:\BntB$, $\deg_2(f) = n$ if and only if $|f^{-1}(1)|$ is odd.
\end{Fact}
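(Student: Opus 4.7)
The plan is to work with the $\mathbb{F}_2$-polynomial representation of $f$, namely $f(x) = \sum_{S \subseteq [n]} c_S \, x^S$ with $c_S \in \mathbb{F}_2$, and observe that $\deg_2(f) = n$ holds if and only if the single coefficient $c_{[n]}$ is nonzero. So the task reduces to expressing $c_{[n]}$ in terms of the truth table of $f$ and checking that it equals $|f^{-1}(1)| \bmod 2$.

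First I would recall the Möbius inversion formula over $\mathbb{F}_2$: evaluating the polynomial identity $f(x) = \sum_{S} c_S \, x^S$ at each Boolean input and inverting the resulting triangular linear system yields
\[
c_S \;=\; \bigoplus_{T \subseteq S} f(e_T),
\]
where $e_T \in \{0,1\}^n$ denotes the characteristic vector of $T \subseteq [n]$. I would derive this by induction on $|S|$, since $f(e_S) = \sum_{T \subseteq S} c_T$ (all variables outside $S$ are zero, killing monomials with support not contained in $S$), and then solving for $c_S$.

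Applying this identity to $S = [n]$, the sum ranges over all subsets $T \subseteq [n]$, which is in bijection with all Boolean inputs via $T \mapsto e_T$. Hence
\[
c_{[n]} \;=\; \bigoplus_{x \in \{0,1\}^n} f(x) \;=\; |f^{-1}(1)| \pmod 2.
\]
Therefore $\deg_2(f) = n$ iff $c_{[n]} \ne 0$ iff $|f^{-1}(1)|$ is odd, which is the claim.

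There is essentially no obstacle here: the only thing to justify carefully is the Möbius inversion step, but it follows immediately from the observation that the matrix sending $(c_S)_S$ to $(f(e_T))_T$ is upper triangular with ones on the diagonal (over $\mathbb{F}_2$), hence invertible, and the inverse is again given by the subset-sum formula above.
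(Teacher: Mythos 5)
Your proof is correct and is exactly the standard argument; the paper states this as a Fact without proof, implicitly relying on precisely this \mob inversion computation of the top $\mbF_2$-coefficient $c_{[n]} = \bigoplus_{x\in\Bn} f(x)$. Nothing to add.
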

\begin{Fact}\label{fact:deg2-fs}
	For any $f:\BntB$, $\deg_2(f) \le \log \fn{f}{0}$.
\end{Fact}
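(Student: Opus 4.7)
My plan is to restrict $f$ to a subcube of dimension $d:=\deg_2(f)$ on which the $\mbF_2$-degree is saturated, and then observe that on such a subcube every Fourier coefficient is nonzero. Since subcube restriction cannot increase Fourier sparsity, this forces $\fn{f}{0}\ge 2^d$.

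For the reduction, by definition of $\deg_2$ there is a set $S\subseteq[n]$ with $|S|=d$ whose monomial $\prod_{i\in S}y_i$ appears in the $\mbF_2$-representation of $f$. Setting $y_i=0$ for every $i\notin S$ kills every $\mbF_2$-monomial using a variable outside $S$ while leaving $\prod_{i\in S}y_i$ intact, so the resulting restriction $g$ on $d$ variables still satisfies $\deg_2(g)=d$. Fact~\ref{fact:odd-deg2} applied to $g$ then gives that $|g^{-1}(1)|$ is odd.

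For the sparsity bound $\fn{g}{0}\le\fn{f}{0}$: in the $\pm 1$ encoding the restriction fixes $x_i=a_i\in\pmB$ for each $i\notin S$, and collecting Fourier terms of $f$ by $U=T\cap S$ yields
\[\hat{g}(U)=\sum_{T:\,T\cap S=U}\hat{f}(T)\prod_{i\in T\setminus S}a_i.\]
The index families $\{T:T\cap S=U\}$ for different $U\subseteq S$ partition $2^{[n]}$, so each nonzero $\hat{g}(U)$ requires at least one nonzero $\hat{f}(T)$, giving the bound.

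The essential step is the parity observation showing $\fn{g}{0}=2^d$. Writing
\[2^d\hat{g}(U)=\sum_{x\in\pmB^d}g(x)\prod_{i\in U}x_i=|A_+|-|A_-|\]
with $A_\pm=\{x\in\pmB^d:g(x)=1,\ \prod_{i\in U}x_i=\pm 1\}$, we have $|A_+|+|A_-|=|g^{-1}(1)|$ is odd, so $|A_+|-|A_-|$ is also odd and in particular nonzero for every $U\subseteq S$. Chaining the three pieces yields $\fn{f}{0}\ge\fn{g}{0}=2^{\deg_2(f)}$, i.e.\ $\deg_2(f)\le\log\fn{f}{0}$. No step is technically delicate; the parity observation at the end is the only real idea, and the rest is bookkeeping for how subcube restriction acts on the Fourier side.
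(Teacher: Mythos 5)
Your proof is correct. The paper states Fact~\ref{fact:deg2-fs} without proof (it is a folklore fact, sometimes attributed to Bernasconi--Codenotti), and your argument is the standard one: restrict to a $\deg_2(f)$-dimensional subcube on which the top $\mbF_2$-monomial survives, invoke Fact~\ref{fact:odd-deg2} to get an odd number of $1$-inputs, deduce that every Fourier coefficient of the restriction is a nonzero (odd) multiple of $2^{-d}$, and use that subcube restriction cannot increase Fourier sparsity. All three pieces — the degree-preserving restriction, the collapse formula $\hat g(U)=\sum_{T\cap S=U}\hat f(T)\prod_{i\in T\setminus S}a_i$, and the parity step — are stated and justified correctly.
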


For any input $x \in \Bn$ and $i \in [n]$, let $x^i$ be the input obtained from $x$ by flipping the value of $x_i$. 
For a Boolean function $f:\BntB$ and an input $x$, if $f(x) \ne f(x^i)$, then we say that $x$ is \emph{sensitive to coordinate} $i$, and $i$ is a \emph{sensitive coordinate} of $x$. We can also define these for blocks. For a set $B \subseteq [n]$, let $x^B$ be the input obtained from $x$ by flipping $x_i$ for all $i\in B$. Similarly, if $f(x) \ne f(x^B)$, then we say that $x$ is \emph{sensitive to block} $B$, and $B$ is a \emph{sensitive block} of $x$. The \emph{sensitivity $\sen(f, x)$ of function $f$ on input $x$} is the number of sensitive coordinates $i$ of $x$: $\sen(f, x) = |\{i \in [n] : f(x) \neq f(x^i)\}|$, and the \emph{sensitivity of function} $f$ is $\sen(f) = \max_x \sen(f, x)$. It is easily seen that the $n$-bit \fand and \for functions both have sensitivity $n$. The \emph{block sensitivity $\bs(f, x)$ of function $f$ on input $x$} is the maximal number of disjoint sensitive blocks of $x$, and the \emph{block sensitivity of function} $f$ is $\bs(f) = \max_x \bs(f, x)$. Note that there are always $\bs(f,x)$ many disjoint \textit{minimal} sensitive blocks, in the sense that any $B\subsetneq B_i$ is not a sensitive block of $x$. 


For a Boolean function $f:\BntB$ and an input $x\in \Bn$, the \emph{certificate complexity $\C(f,x)$ of function $f$ on input $x$} is the minimal number of variables restricting the value of which fixes the function to a constant. The \emph{certificate complexity} of $f$ is $\C(f) = \max_x \C(f,x)$, and the \emph{minimal certificate complexity} of $f$ is $\Cmin(f) = \min_x \C(f,x)$. The \emph{decision tree complexity} $\dt(f)$ of function $f$ is the minimum depth of any decision tree that computes $f$.

A \emph{subfunction} or a \emph{restriction} of a function $f$ on $\Bn$ is obtained from $f$ by restricting the values of some variables. 
Sometimes we say to restrict $f$ to \emph{above} an input $d$, or to take the subfunction $f'$ over $\{x : x \succeq d\}$, then we mean to restrict variables $x_i$ to be 1 whenever $d_i = 1$. Similarly, we say to restrict $f$ to \emph{under} an input $u$, or take the subfunction $f'$ over $\{x : x \preceq u\}$, meaning to restrict $x_i$ to be 0 whenever $u_i = 0$.


Let $\mathcal F_n$ be the set of all the real-valued functions on $\Bn$. A complexity measure $M:\cup_{n=1}^\infty \mathcal F_n\to \mbR$ is \emph{downward non-increasing} if $M(f') \le M(f)$ for all subfunction $f'$ of $f$. That is, restricting variables does not increase the measure $M$. It is easily seen that the \gf-degree, the alternating number, the decision tree complexity, the sensitivity, the block sensitivity, the certificate complexity, the Fourier sparsity, are all downward non-increasing. When $M$ is not downward non-increasing, it makes sense to define the \emph{closure} by $M^{\clo}(f) = \max_{f'} M(f')$ where the maximum is taken over all subfunctions $f'$ of $f$. In particular, $\Cminc(f) = \max_{f'} \Cmin(f')$. The next theorem relates decision tree complexity to $\Cminc$.

\begin{Thm}[\cite{TWXZ13}]\label{thm:dt-deg-reduction}
	For any $f:\BntB$, it holds that $\dt(f) \le \Cminc(f)\deg_2(f)$.
\end{Thm}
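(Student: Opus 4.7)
The plan is to prove the theorem by induction on $d := \deg_2(f)$. The base case $d = 0$ is immediate since $f$ is then a constant and $\dt(f) = 0$. For the inductive step with $d \ge 1$, I would pick an input $x^*$ that achieves $\C(f, x^*) = \Cmin(f)$, together with a minimum-size certificate $C$ for $x^*$, so that $|C| \le \Cmin(f) \le \Cminc(f)$. The decision tree I construct queries all variables in $C$ first: the branch matching $x^*|_C$ fixes $f$ to a constant by definition of a certificate, while on every other branch $b$ I would invoke the inductive hypothesis on the subfunction $f|_{x_C = b}$.

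The main step, and the part I expect to require real thought, is to show that on \emph{every} branch $b$ one has $\deg_2(f|_{x_C = b}) \le d - 1$. To establish this, I would expand the $\gf$-polynomial of $f$ by grouping monomials according to their intersection with $C$:
\[ f(x) = \sum_{C' \subseteq C} x^{C'}\, g_{C'}(y), \qquad y = x|_{[n] \setminus C}, \]
where $g_{C'}(y)$ collects the $\gf$-monomials $y^T$ of $f$ whose support intersects $C$ in exactly $C'$ (with $x^{C'}$ factored out). Since $\deg_2(f) = d$, each $g_{C'}$ has degree at most $d - |C'|$, so only $g_\emptyset$ can contribute a degree-$d$ monomial. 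For any branch $b$,
\[ f|_{x_C = b}(y) = \sum_{C' \subseteq \supp(b)} g_{C'}(y), \]
and hence the degree-$d$ part of the restriction equals the degree-$d$ part of $g_\emptyset$ and is identical across all branches. Applying this to the branch $b = x^*|_C$, where the restriction is a constant and so has no degree-$d$ part, I conclude that $g_\emptyset$ itself has degree at most $d - 1$; consequently, $\deg_2(f|_{x_C = b}) \le d - 1$ on every branch.

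Once this claim is available, the induction closes cleanly. For each branch $b$, the subfunction $f|_{x_C = b}$ satisfies $\Cminc(f|_{x_C = b}) \le \Cminc(f)$ because $\Cminc$, being a closure, is downward non-increasing. The inductive hypothesis then gives $\dt(f|_{x_C = b}) \le \Cminc(f)(d - 1)$, and combining with the initial $|C| \le \Cminc(f)$ queries yields
\[ \dt(f) \le \Cminc(f) + \Cminc(f)(d - 1) = \Cminc(f)\cdot d, \]
as desired. The delicate piece is the degree-reduction claim; the rest is largely bookkeeping once one observes that the top-degree part of the restriction is branch-independent and must vanish on the constant branch.
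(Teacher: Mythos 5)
Your proof is correct, and it is essentially the standard argument from~\cite{TWXZ13} (adapted to ordinary decision trees as in~\cite{Tsa14}, which the paper cites without reproving): fix a minimum certificate, observe via the $\gf$-polynomial decomposition that the top-degree homogeneous part of every restriction along those variables is branch-independent and must vanish on the constant branch, and then recurse on $\deg_2$. The one step worth stating explicitly in a write-up is that $\Cminc$ is downward non-increasing precisely because it is defined as a closure over subfunctions, which is what licenses replacing $\Cminc(f|_{x_C=b})$ by $\Cminc(f)$ in the induction; you did note this, so the argument is complete.
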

(The original theorem proved was actually $\pdt(f) \le \Cpminc(f)\deg_2(f)$, where $\pdt(f)$ is the parity decision tree complexity and $\Cpminc(f)$ is the parity minimum certificate complexity. But as observed by \cite{Tsa14}, the same argument applies to standard decision tree as well.)

\medskip For general Boolean functions $f$, we have $\sen(f) \le \bs(f) \le \C(f)$. But when $f$ is monotone, equalities are achieved.
\begin{Fact}\label{fact:s-bs-C monotone}
	If $f:\BntB$ is monotone, then $\sen(f) = \bs(f) = \C(f)$.
\end{Fact}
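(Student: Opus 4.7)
The plan is to establish $\sen(f) \ge \C(f)$; together with the universal inequalities $\sen(f) \le \bs(f) \le \C(f)$, this forces all three quantities to coincide.

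First I would prove a structural observation: for monotone $f$ and any $x$ with $f(x)=1$, some minimum-size certificate for $x$ lies entirely inside the 1-coordinates $\{i : x_i=1\}$ (the case $f(x)=0$ is symmetric, using 0-coordinates). Starting from an arbitrary certificate $C$, set $C_1 = C \cap \{i : x_i=1\}$ and pick any $y$ agreeing with $x$ on $C_1$; define $y'$ from $y$ by zeroing out the coordinates in $C \setminus C_1$. Then $y'$ agrees with $x$ on all of $C$, so $f(y')=f(x)=1$, and since $y' \preceq y$, monotonicity yields $f(y)\ge f(y')=1$. Hence $C_1$ is itself a certificate, so any minimum certificate can be assumed to lie inside the 1-coordinates.

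Next, fix $x^*$ with $\C(f, x^*) = \C(f)$ and, without loss of generality, $f(x^*)=1$, and let $C \subseteq \{i : x^*_i = 1\}$ be a minimum certificate for $x^*$. Define $z \in \Bn$ by $z_i = 1$ exactly when $i \in C$ (and $z_i = 0$ otherwise). Then $z$ agrees with $x^*$ on $C$, so the certificate property gives $f(z)=1$.

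Finally, I would show each $i \in C$ is sensitive at $z$. By minimality of $C$, the set $C \setminus \{i\}$ is not a certificate at $x^*$, so there exists some $w$ agreeing with $x^*$ on $C \setminus \{i\}$ with $f(w) = 0$. But $z^i$ carries $1$'s exactly on $C \setminus \{i\}$ and $0$'s elsewhere, hence $z^i \preceq w$ coordinatewise; monotonicity then gives $f(z^i) \le f(w) = 0$. Thus every $i \in C$ is sensitive at $z$, so $\sen(f) \ge \sen(f, z) \ge |C| = \C(f)$. The only real subtlety is the first step, namely recognizing that a minimum certificate at a 1-valued input need only ``pin down'' the 1-coordinates; once that is in place, descending from $x^*$ to the canonical minimum input $z$ supported on $C$ makes all $|C|$ certificate coordinates visibly sensitive.
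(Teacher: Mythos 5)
Your proof is correct. The paper itself does not prove this fact—it cites it as known from Nisan's work—but your argument is the standard one: reduce a certificate at a 1-input to one supported on its 1-coordinates (using monotonicity to drop the 0-coordinates), descend to the minimal input $z$ supported on that certificate set, and use minimality of the certificate together with monotonicity to show every certificate coordinate is sensitive at $z$. One small remark on the "without loss of generality $f(x^*)=1$" step: it is not a symmetry of the problem in the usual complementation sense (replacing $f$ by $1-f$ preserves monotonicity only if you also reverse the order, i.e.\ pass to $g(x)=1-f(\bar x)$, under which $\sen$, $\bs$, $\C$ are all invariant), so it would be cleaner either to invoke that specific symmetry explicitly or to simply note, as you do at the end, that the $f(x^*)=0$ case is handled by the mirror-image argument with $0$- and $1$-coordinates and $\preceq$ and $\succeq$ swapped. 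Either way the argument is sound.
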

\begin{Fact}[\cite{MO10}]\label{fact:sen-deg2}
	If $f:\BntB$ is monotone, then $\sen(f) \le \deg_2(f)$.
\end{Fact}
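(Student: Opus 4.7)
The plan is to localize the sensitivity at a single worst-case input, then restrict $f$ to the sensitive coordinates so that monotonicity forces the restriction to be either $\fand_s$ or $\for_s$, both of which have full $\mathbb{F}_2$-degree. Concretely, I would first pick $x\in\Bn$ with $\sen(f,x)=\sen(f)=:s$, and let $S\subseteq[n]$ be its set of $s$ sensitive coordinates.

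Next I would split on $f(x)$. If $f(x)=0$, monotonicity implies that every sensitive coordinate $i\in S$ must satisfy $x_i=0$ (otherwise flipping $x_i$ to $0$ would raise $f$). Fix all coordinates outside $S$ to their values in $x$, obtaining a subfunction $g:\{0,1\}^S\to\{0,1\}$. By construction $g(0^S)=0$ and $g(e_i)=1$ for every $i\in S$, and $g$ is monotone because $f$ is; hence $g=\for_s$. The symmetric case $f(x)=1$ forces $x_i=1$ for $i\in S$ and yields $g=\fand_s$.

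Finally, I would compute the $\mathbb{F}_2$-degree of these two functions. The function $\fand_s$ is literally the monomial $y_1\cdots y_s$, so $\deg_2(\fand_s)=s$. The function $\for_s$ has $|\for_s^{-1}(1)|=2^s-1$, which is odd, so Fact~\ref{fact:odd-deg2} gives $\deg_2(\for_s)=s$ as well. Because restricting variables is a substitution of constants into a multilinear $\mathbb{F}_2$-polynomial and hence cannot raise its degree, we obtain $\deg_2(f)\ge\deg_2(g)=s=\sen(f)$.

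I do not expect a serious obstacle here; the only subtlety is to explicitly invoke monotonicity both (i) to force the direction of each sensitive flip (so that $S$ sits inside a common subcube above or below $x$), and (ii) to upgrade $g$ from having the prescribed values at $0^S$ and at the singletons $\{e_i\}$ to being exactly $\for_s$ on the whole subcube. Once these two observations are in place, Fact~\ref{fact:odd-deg2} and the monotonicity-of-degree-under-restriction finish the argument.
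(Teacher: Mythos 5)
Your proof is correct and self-contained. The paper itself gives no argument for this fact---it is simply cited from \cite{MO10}---so there is nothing in the text to compare against, but your argument is the natural one. The key steps all check out: monotonicity forces the sensitive coordinates at a worst-case input $x$ to all point in the same direction (all $x_i=0$ when $f(x)=0$, all $x_i=1$ when $f(x)=1$), so the restriction $g$ to those $s$ coordinates is a monotone function with $g(0^S)=0$, $g(e_i)=1$ (resp.\ the complementary pattern), which pins $g$ down to $\for_s$ (resp.\ $\fand_s$); both have $\deg_2=s$, the $\for$ case by Fact~\ref{fact:odd-deg2} since $|\for_s^{-1}(1)|=2^s-1$ is odd; and $\deg_2$ cannot increase under restriction, giving $\deg_2(f)\ge s=\sen(f)$. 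One tiny wording slip: in the $f(x)=0$ case, the reason $x_i$ must be $0$ is that if $x_i=1$ then $x^i\prec x$ while $f(x^i)=1>0=f(x)$, contradicting monotonicity---it is not that ``flipping $x_i$ to $0$ would raise $f$,'' but that $f$ \emph{does} rise as we move \emph{down} the cube. The logic you intend is right; just state it as a violation of monotonicity on a descending edge.
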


One can associate a partial order $\preceq$ to the Boolean hypercube $\Bn$: $x\preceq y$ if $x_i \le y_i$ for all $i$. We also write $y \succeq x$ when $x\preceq y$. If $x\preceq y$ but $x\ne y$, then we write $x \prec y$ and $y\succ x$. A path $x^{(1)} \rightarrow \cdots \rightarrow x^{(k)}$ on $\Bn$ is \textit{monotone} if $x^{(i)} \prec x^{(i+1)}$ for all $i$.
\begin{Def}
	For any function on $\Bn$, the \emph{alternating number of a path} $x^{(1)} \rightarrow \cdots \rightarrow x^{(k)}$ is the number of $i\in \{1,2,...,k-1\}$ with $f(x^{(i)}) \ne f(x^{(i+1)})$. The alternating number $\alt(f,x)$ of input $x\in \Bn$ is the maximum alternating number of any monotone path from $0^n$ to $x$, and the \emph{alternating number of a function} $f$ is $\alt(f) = \alt(f,1^n)$. Equivalently, one can also define $\alt(f)$ to be the largest $k$ such that there exists a list $\{x^{(1)}, x^{(2)}, \ldots, x^{(k+1)}\}$ with $x^{(i)} \preceq x^{(i+1)}$ and $f(x^{(i)}) \neq f(x^{(i+1)})$, for all $i\in [k]$. 
\end{Def}
A function $f:\Bn\to\mbR$ is \emph{monotone} if $f(x) \le f(y)$, $\forall x\preceq y$. A function $f:\Bn\to\mbR$ is \emph{anti-monotone} if $f(x) \le f(y)$, $\forall x\succeq y$. It is not hard to see that $\alt(f) = 0$ iff $f$ is constant, and $\alt(f) = 1$ iff $f$ is monotone or anti-monotone.


\begin{Def}
For a function $f$ on $\Bn$, an input $u \in \Bn-\{1^n\}$ is called a \emph{max term}  if
$f(u) \neq f(1^n)$, and $f(x) = f(1^n)$ for all $x \succ u$. An input $d \in \Bn-\{0^n\}$ is called a \emph{min term} if $f(d) \neq f(0^n)$, and $f(x) = f(0^n)$ for all $x \prec d$.
\end{Def}



\paragraph{Communication complexity}
Suppose that for a bivariate function $F(x,y)$, the input $x$ is given to \alice and $y$ to \bob. The (deterministic) \emph{communication complexity} $\dcc(F)$ is the minimum number of bits needed to be exchanged by the best (deterministic) protocol that computes $F$ (on the worst-case input). 

The rank (over \mbR) of the communication matrix for bit-wise composed functions coincides with some natural parameters of the outer function $f$. For XOR functions $\fcxor$, it holds that $\rank(M_{\fcxor}) = \fn{f}{0}$, and for AND functions $\fcand$, it holds that $\rank(M_{\fcand}) = \mono(f)$. When $f$ is \for function of $n$ variables, we have $\rank(M_{\fcand}) = \mono(\for_n) = 2^n-1$.

It is well known that communication can simulate queries. More specifically, for XOR functions and AND functions, we have that \begin{align}\label{eq:cc-dt}
	\dcc(\fcand) \le 2\dt(f) \text{ and } \dcc(\fcxor) \le 2\dt(f).
\end{align}

In a \B-communication matrix $M$, a \emph{1-rectangle} is a all-1 submatrix. The \emph{1-covering number} $\cover_1(M)$ of matrix $M$ is the minimum number of 1-rectangles that can cover all 1 entries in $M$. (These 1-rectangles need not be disjoint.) For notational convenience, we sometimes write $\cover_1(F)$ for $\cover_1(M_F)$. Lov{\'{a}}sz \cite{Lov90} showed the following upper bound.
\begin{Thm}[\cite{Lov90}]\label{thm:CC-N-rank}
	For any Boolean funcion $F(x,y)$, it holds that $\dcc(F) \le \log\cover_1(M_F)\cdot \log\rank(M_F)$.
\end{Thm}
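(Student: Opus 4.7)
The plan is to design a recursive deterministic protocol achieving the stated bound $\dcc(F)\le\log\cover_1(M_F)\cdot\log\rank(M_F)$. The protocol maintains a live sub-matrix $M'$ of $M_F$ together with its inherited 1-cover (of size at most $c := \cover_1(M_F)$), both obtained by restricting the original 1-cover of $M_F$ to the rows and columns still consistent with the transcript. Each \emph{phase} of the protocol spends at most $\log c$ bits of communication and guarantees that the rank of $M'$ drops by a factor of at least two. Starting from $r := \rank(M_F)$, this means $\log r$ phases suffice, giving the total cost $\log c\cdot\log r$. When $M'$ reaches rank $\le 1$, the protocol terminates with $O(1)$ additional bits (the matrix is either all-zero, all-one, or a 1-rectangle).

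The engine of a phase is the following rank-halving claim: \emph{for any Boolean matrix $M'$ of rank $r'\ge 2$ equipped with a 1-cover $\{R_i = A_i\times B_i\}_{i=1}^{c}$, there is a sub-protocol costing $\log c + O(1)$ bits that either terminates with output $1$ or restricts $M'$ to a sub-matrix of rank at most $r'/2$.} I would prove this by rank arithmetic: fix any $R_i$ and split $M'$ into four quadrants by $(A_i,\overline{A_i})\times(B_i,\overline{B_i})$. The quadrant $A_i\times B_i$ is all-ones and contributes rank exactly $1$, while the other three quadrants have ranks $r_{10}(i)$, $r_{01}(i)$, $r_{00}(i)$ satisfying $r' \le 1 + r_{10}(i) + r_{01}(i) + r_{00}(i)$ by subadditivity of rank across block partitions. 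A counting/averaging argument across the $c$ rectangles in the cover should then produce, for each particular input $(x,y)$, an index $i^\star$ such that the quadrant of $(x,y)$ under $R_{i^\star}$'s partition has rank at most $r'/2$.

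Given this claim, the phase is implemented as follows. One party (say \alice) picks an appropriate $i^\star$ based on her own input $x$ and transmits it using $\log c$ bits; \bob, who also knows $M'$ and the cover, then responds with $[y\in B_{i^\star}]$; and \alice with $[x\in A_{i^\star}]$. If both response bits are $1$ then $(x,y)\in R_{i^\star}$ and the protocol outputs $1$. Otherwise, $(x,y)$ lies in one of the three non-monochromatic quadrants of the $R_{i^\star}$-partition, whose rank is at most $r'/2$ by the choice of $i^\star$, and the phase recurses on this quadrant. The inherited 1-cover of the recursed sub-matrix has size at most $c$ and is updated accordingly.

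The main obstacle will be making the rank-halving claim go through with exactly the quantitative strength needed. The naive averaging argument ($\sum_i (r_{10}(i)+r_{01}(i)+r_{00}(i)) \ge c(r'-1)$) only yields a rectangle $R_i$ with some non-monochromatic block of rank at most $(r'-1)/3$, which is insufficient because the block containing $(x,y)$ might be the large one. Overcoming this requires letting the $i^\star$ be chosen \emph{adaptively} based on $(x,y)$ and then showing that for every $(x,y)$ some good $i^\star$ exists. The crux is a double-counting across rows and columns of $M'$, simultaneously using (i) the all-ones structure of each cover rectangle to pin down one block to rank $1$, and (ii) the saturation of the cover, which ensures that every 1-entry of $M'$ participates in many rectangles. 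Formalizing this without losing spurious factors in $c$ or $r'$ is the technical heart of the argument.
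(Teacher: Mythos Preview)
The paper does not prove this theorem at all; it is quoted from Lov\'asz's 1990 survey \cite{Lov90} as a black box and used later in the proof of Theorem~\ref{thm:alt-mon}. So there is no ``paper's own proof'' to compare against.

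As for the viability of your plan: the skeleton (phases of $\log c$ bits, rank halving, $\log r$ phases) is exactly right, but the gap you yourself flag is the entire content of the theorem, and the patch you sketch does not close it. Your four-block partition with respect to a single cover rectangle $R_{i}=A_i\times B_i$ gives, via rank subadditivity, only $r'\le 1+r_{10}(i)+r_{01}(i)+r_{00}(i)$; averaging over $i$ cannot force the particular quadrant containing $(x,y)$ to be small, because the $(0,0)$-quadrant $M'[\bar A_i,\bar B_i]$ can carry essentially all of the rank regardless of which $i$ you pick, and nothing about the cover prevents $(x,y)$ from landing there every time. The ``adaptive choice of $i^\star$ via double counting'' you propose is not an argument yet: you would need, for every $(x,y)$, some $i$ with $(x,y)$ in a low-rank quadrant, and no amount of averaging over $i$ alone yields this, since the low-rank quadrants for different $i$ need not jointly cover the $0$-inputs.

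What actually drives Lov\'asz's proof is a sharper algebraic inequality than block subadditivity: for any all-$1$ submatrix $A\times B$ of $M'$ one has
\[
\rank(M'[A,\cdot])+\rank(M'[\cdot,B])\ \le\ \rank(M')+1,
\]
because the rows in $A$ are constant on $B$. This means that for \emph{every} rectangle in the cover, at least one of $M'[A_i,\cdot]$ or $M'[\cdot,B_i]$ already has rank $\le (r'+1)/2$. The protocol then never descends into the dangerous $(0,0)$-quadrant at all: one of the players always restricts into the guaranteed small side, and a short search through the cover (this is where the $\log c$ cost enters) handles the case when the first player's input misses the chosen rectangle. Your four-quadrant recursion is the wrong decomposition; replacing it with the two ``half-matrices'' $M'[A_i,\cdot]$ and $M'[\cdot,B_i]$ and using the displayed inequality is the missing idea.
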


\section{The Sensitivity Conjecture}

This section is devoted to the proof of Theorem \ref{thm:alt-sen}. We will first show the following lemma, in which the first statement is used in this section and the second statement will be used in Section \ref{sec:logrank} for proving the Log-rank Conjecture of XOR functions.
\begin{Lem}\label{thm:Cmin-alt}
	For any $f:\BntB$, it holds that
	\begin{enumerate}
		\item $\max\{\C(f,0^n),\C(f,1^n)\} \le \alt(f)\cdot \sen(f)$
		\item $\max\{\C(f,0^n),\C(f,1^n)\} \le \alt(f)\cdot \deg_2(f)$.
	\end{enumerate}
\end{Lem}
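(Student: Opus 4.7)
The plan is to prove both inequalities by a joint induction on $\alt(f)$, with $\C(f,1^n)$ handled via a min term of $f$ and $\C(f,0^n)$ via a max term. The base case $\alt(f)=0$ is immediate since $f$ is then constant and both certificate complexities vanish.

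For the inductive step with $\alt(f)=a\ge 1$, consider $\C(f,1^n)$ first; we may assume $f$ is not constantly $f(1^n)$, so a min term $d$ exists. Two facts drive the argument. The first is the weight bound $|d|\le\min\{\sen(f),\deg_2(f)\}$: every coordinate $i$ with $d_i=1$ is sensitive at $d$ because $d^i\prec d$ and the minimality of $d$ forces $f(d^i)=f(0^n)\ne f(d)$, giving $|d|\le\sen(f,d)\le\sen(f)$; and restricting $f$ to the subcube $[0^n,d]$ yields, up to negation, the $|d|$-variable \fand, whose $\gf$-degree equals $|d|$, so $|d|\le\deg_2(f)$ since $\deg_2$ is downward non-increasing. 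The second is an alternating-number reduction: the subfunction $g=f|_{[d,1^n]}$ satisfies $\alt(g)\le a-1$. Indeed, any monotone path from $d$ to $1^n$ inside $[d,1^n]$ can be prepended with a monotone path from $0^n$ to $d$, and by minimality of $d$ every vertex of the prepended segment other than $d$ has $f$-value $f(0^n)$, so the prepended segment contributes exactly one alternation; the concatenated path from $0^n$ to $1^n$ therefore has alternating number $1+(\text{alt of the inner path})\le\alt(f)=a$.

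Combining these, fix the $|d|$ ones of $d$ to $1$; this restricts $f$ to $g$. The inductive hypothesis applied to $g$ yields a $1$-certificate for $g$ of size at most $(a-1)\sen(g)\le(a-1)\sen(f)$, which is disjoint from the first set of coordinates because $g$'s free variables are exactly the zero-coordinates of $d$. The union is a $1$-certificate for $f$ of size at most $a\cdot\sen(f)$, and the $\deg_2$ version follows identically. The bound for $\C(f,0^n)$ is obtained by the mirror argument: replace "min term $d$" with "max term $u$", "$[d,1^n]$" with "$[0^n,u]$", fixings to $1$ with fixings to $0$, and use the symmetric weight bounds $|\{i:u_i=0\}|\le\sen(f),\deg_2(f)$ (via sensitivity at $u$ and the \for-restriction to $[u,1^n]$); the path-\emph{append} argument shows $\alt(f|_{[0^n,u]})\le a-1$ dually.

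The main obstacle is the alternating-number reduction $\alt(g)\le a-1$: it crucially uses that $d$ is a \emph{minimal} input with $f(d)\ne f(0^n)$, not an arbitrary such input, so that the prepended path picks up exactly one alternation rather than more. Once this reduction is in place, the weight bounds and the disjoint-union certificate bookkeeping are routine.
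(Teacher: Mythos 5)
Your proof is correct and follows essentially the same route as the paper's: pick a max/min term, bound the number of coordinates it determines by $\sen(f)$ (via sensitivity at the term) and by $\deg_2(f)$ (via the restricted subcube on the opposite side, which is an $\fand$/$\for$ and hence full-degree), observe the alternating number drops by one after restricting past the term, and recurse. The only cosmetic differences are that the paper proves the $\C(f,0^n)$ bound via max terms and obtains $\C(f,1^n)$ by the substitution $g(x)=f(\bar x)$, whereas you run the dual min-term argument directly, and your base case is $\alt=0$ (constant) rather than $\alt=1$ (monotone), which slightly streamlines matters by avoiding the monotone-function facts the paper invokes there.
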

\begin{proof}
	First note that it suffices to prove the two upper bounds for $\C(f,0^n)$, because then we can take $g(x) = f(\bar x)$ to get that $\C(f,1^n) = \C(g,0^n) \le \alt(g)\cdot \sen(g) = \alt(f)\cdot \sen(f)$.
	
	We prove upper bounds on $\C(f,0^n)$ by induction on $\alt(f)$. When $\alt(f) = 1$, the function is either monotone or anti-monotone, thus \[\C(f,0^n) \le \C(f) = \sen(f) \le \deg_2(f),\]
	where the first inequality is by definition of $\C(f,0^n)$, the middle equality is by Fact \ref{fact:s-bs-C monotone} and the last inequality is because $\sen(f) \le \deg_2(f)$ for monotone $f$ (Fact \ref{fact:sen-deg2}). Now we assume that the inequalities in the lemma hold for $\alt(f) < a$ and we will show that they hold for $f$ with $\alt(f) = a$ as well. 
	Let $u$ be a max term of $f$.
	Define $S_0(u) \defeq \{i\in [n]: u_i = 0\}$, and consider the subcube above $u$: $\{x: x \succeq u\}$. Let $f'$ be the subfunction obtained by restricting $f$ on this subcube. By the definition of max term $f(u) \ne f(u^i)$ for all $i\in S_0(u)$. Therefore,
	\begin{align}\label{eq:S0sen}
	|S_0(u)| \le \sen(f,u) \le \sen(f).
	\end{align}
	We know that any point $z \succ u$ has $f(z) = f(1^n) \ne f(u)$. So the number of 1-inputs of $f'$ 
	is odd, implying that $\deg_2(f') = |S_0(u)|$ (Fact \ref{fact:odd-deg2}). Thus we have
	\begin{align}\label{eq:S0deg}
	|S_0(u)| = \deg_2(f') \le \deg_2(f).
	\end{align}

	Now consider another restriction of $f$, this time to the subcube \emph{under} $u$, \ie $\{x: x\preceq u\}$. This is implemented by restricting all variables in $S_0(u)$ to $0$, yielding a subfunction $f''$ with $\alt(f'') \le \alt(f) - 1$. Using induction hypothesis, we have that
	\begin{align}\label{eq:Cminsen}
	\C(f'',0^{[n]-S_0(u)}) \le \alt(f'') \cdot \min\{\sen(f''),\deg_2(f'')\} \le (\alt(f) - 1) \cdot \min\{\sen(f),\deg_2(f)\} 
	\end{align} 
	Recall that $f''$ is obtained from $f$ by restricting $|S_0(u)|$ variables, thus \[\C(f,0^n) \le |S_0(u)| + \C(f'',0^{[n]-S_0(u)}).\]
	Plugging Eq.\eqref{eq:S0sen} and Eq.\eqref{eq:Cminsen} into the above inequality gives
	\[\C(f,0^n) \le \alt(f) \cdot \min\{\sen(f),\deg_2(f)\},\]
	completing the induction.
\end{proof}

Now we are ready to prove the following theorem, which gives an explicit constant for Theorem \ref{thm:alt-sen}.
\begin{Thm}\label{thm:general}
For any boolean function $f$, 
\begin{equation}\label{eqn:general}
\bs(f) \le
\begin{cases}
C_t \cdot \sen(f) & \text{ if }\alt(f) = 2t, \\
(C_t + 1) \cdot \sen(f) & \text{ if }\alt(f) = 2t+1,
\end{cases}
\end{equation}
where $C_t = \sum_{i=1}^t (i+2) = \frac{1}{2}t(t+5)$.
\end{Thm}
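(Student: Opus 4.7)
My plan is to prove Theorem~\ref{thm:general} by strong induction on $a := \alt(f)$, using max-term and min-term restrictions in the spirit of Lemma~\ref{thm:Cmin-alt} to reduce to subfunctions of strictly smaller alternating number, while tracking how many minimal sensitive blocks at the chosen input survive each restriction. The base cases $a\in\{0,1\}$ are immediate: a constant $f$ has $\bs(f) = 0$, and an (anti-)monotone $f$ has $\bs(f) = \sen(f) = (C_0 + 1)\sen(f)$ by Fact~\ref{fact:s-bs-C monotone}.

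For the inductive step $a\ge 2$, I would fix $x^*$ realizing $\bs(f, x^*) = \bs(f) = k$ together with disjoint minimal sensitive blocks $B_1,\ldots,B_k$. The easy scenario is $f(x^*) \ne f(1^n)$: since $x^*$ itself lies in the nonempty set $\{y \succeq x^* : f(y) \ne f(1^n)\}$, any maximal element $u$ of that set is a max-term with $u \succeq x^*$. Restricting $f$ to $\{x : x \preceq u\}$ yields a subfunction $f''$ with $\alt(f'') \le a - 1$ and $|S_0(u)| \le \sen(f, u) \le \sen(f)$, by the same argument used inside the proof of Lemma~\ref{thm:Cmin-alt}. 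Any block $B_i$ disjoint from $S_0(u)$ satisfies $(x^*)^{B_i} \preceq u$ and is still a sensitive block of $f''$ at $x^*$, and by disjointness of the $B_i$ at most $|S_0(u)| \le \sen(f)$ of them can meet $S_0(u)$. This yields $\bs(f) \le \bs(f'') + \sen(f)$, and the inductive hypothesis on $\alt(f'') \le a - 1$ closes the step, cleanly producing the odd-case bound $(C_t + 1)\sen(f)$ and in fact a slightly tighter-than-claimed bound in the even case. A symmetric argument with a min-term $d \preceq x^*$ handles $f(x^*) \ne f(0^n)$.

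The main obstacle is the remaining case in which $f$ is constant on both $\{x : x \preceq x^*\}$ and $\{x : x \succeq x^*\}$ (so in particular $f(x^*) = f(0^n) = f(1^n)$): here neither a max-term above $x^*$ nor a min-term below $x^*$ exists, and every minimal sensitive block at $x^*$ is forced to be ``mixed'', containing both coordinates where $x^*$ is $0$ and coordinates where it is $1$ (a pure up- or down-flip would certify one of the forbidden terms). To close the induction in this case I would invoke the existence of a min-term $d$ and a max-term $u$ with $d \preceq u$ (guaranteed by $\alt(f) \ge 2$) and restrict $f$ to the subcube $[d, u]$, which drops $\alt$ by $2$ but forces one to pay both for the coordinates frozen by $d$ and $u$ and for the coordinates of $x^*$ that must be ``moved'' to lie inside $[d, u]$. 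Making these costs match the asymmetric increments $(C_t+1)-C_t=1$ and $C_t-(C_{t-1}+1)=t+1$ appearing in the statement — in particular explaining the extra $t+1$ that the even case costs relative to the odd case — is the technical heart of the argument and the step I expect to be the main obstacle.
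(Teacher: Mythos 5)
Your treatment of the easy case matches the paper's Case 1 essentially exactly: when $x^*$ lies below a max term $u$ (or above a min term $d$), you restrict to the subcube under $u$ (resp.\ over $d$), lose at most $|S_0(u)|\le\sen(f)$ of the disjoint minimal sensitive blocks, drop the alternating number by at least one, and close by induction. That part is fine.

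The gap is in what you call the ``main obstacle,'' and I don't think the route you sketch for it can be repaired. In the hard case ($f(x^*)=f(0^n)=f(1^n)$, with $f$ constant on $\{y\preceq x^*\}$ and on $\{y\succeq x^*\}$), you propose picking a min term $d$ and a max term $u$ with $d\preceq u$ and restricting to $[d,u]$. But $x^*$ need not lie in $[d,u]$ at all, so after this restriction you have no access to $\bs(f,x^*)$; ``moving'' $x^*$ into $[d,u]$ is not an operation that block sensitivity tolerates, and there is no bound on how many coordinates of $x^*$ disagree with the interval. The quantity $\alt$ dropping by $2$ is of no use if the point you care about leaves the domain. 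The paper's Case~2 argument is fundamentally different and is the actual content of the theorem: it decomposes each minimal sensitive block $B_i$ of $x^*$ into $U_i=\{j\in B_i: x^*_j=1\}$ and $D_i=\{j\in B_i: x^*_j=0\}$, observes that in this case both parts are nonempty (a pure up-flip or down-flip would produce a forbidden max/min term) and $f(x^*)=f((x^*)^{U_i})=f((x^*)^{D_i})$ by minimality, and then runs a certificate-complexity counting argument (statement~1 of Lemma~\ref{thm:Cmin-alt} applied to the subfunctions above and below $(x^*)^{B_j}$, organized as a $2\ell\times\ell$ matrix) to locate a single $T_i\in\{U_i,D_i\}$ such that $(x^*)^{T_i}$ still has at least $\bs(f,x^*)-t\cdot\sen(f)$ disjoint sensitive blocks. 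Since $(x^*)^{T_i}$ lands in Case~1, the induction closes, and this is precisely where the extra additive $t$ in $C_t-(C_{t-1}+2)=t$ comes from. Without some version of this block-decomposition-plus-counting idea, the induction does not close in the hard case, so the proposal as written does not prove the theorem.
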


\begin{proof}
We prove Eq.\eqref{eqn:general} by induction on $t = \lfloor \alt(f)/2\rfloor$. Clearly it holds when $t=0$: If $\alt(f)=0$ then $f$ is a constant function and $\bs(f) = \sen(f) = 0$. When $\alt(f) = 1$, $f$ is monotone or anti-monotone, thus $\bs(f) = \sen(f)$.

Now for any Boolean function $f$ with $\alt(f) > 1$, we first consider the case when $\alt(f) = 2t \ge 2$. 
We will bound the block sensitivity for each input $x$.
Consider the following possible properties for $x$.

\begin{enumerate} 
	\item\label{cond:maxterm} there exists a max term $u$ of $f$ such that $x \preceq u$;
	\item\label{cond:minterm} there exists a min term $d$ of $f$ such that $x \succeq d$.
\end{enumerate}

{\bf Case 1}: $x$ satisfies at least one of the above conditions.
Without loss of generality assume it satisfies the first one; {the other case can be similarly argued}. 
Fix such a max term $u\succeq x$.
By definition of max term, we know that $\alt(f,u) \le \alt(f) - 1$, and that $u$ is sensitive to all $i\in S_0(u)\defeq \{i: u_i = 0\}$. Therefore, $|S_0(u)| \le \sen(f,u) \le \sen(f)$.

Let $f'$ be the subfunction of $f$ restricted on the subcube $\{t : t \preceq u\}$, then $\alt(f') = \alt(f,u) \le \alt(f) - 1 = 2t-1 = 2(t-1)+1$.
By induction hypothesis and the fact that sensitivity is downward non-increasing, we have
\begin{align}\label{eq:bs-ind}
	\bs(f', x) \le \bs(f') \le (C_{t-1} + 1) \cdot \sen(f')
\le (C_{t-1} + 1) \cdot \sen(f).
\end{align}

{Next it is not hard to see that
\begin{align}\label{eq:bs-res}
	\bs(f, x) \le \bs(f', x) + |S_0(u)|.	
\end{align}
Indeed, take any disjoint minimal sensitive blocks $B_1, \ldots, B_\ell\subseteq [n]$ of $x$ (with respect to $f$), where $\ell = \bs(f,x)$. If $B_i \subseteq [n]-S_0(u)$, then $x$ is still sensitive to $B_i$ in $f'$. As the $B_i$'s are disjoint, at most $|S_0(u)|$ many $B_i$'s are not contained in $[n]-S_0(u)$, thus at least $\bs(f,x) - |S_0(u)|$ blocks $B_i$ are still sensitive blocks of $x$ in $f'$. Therefore, $\bs(f,x) - |S_0(u)| \le \bs(f',x)$, as Eq.\eqref{eq:bs-res} claimed.}

Combining Eq.\eqref{eq:bs-ind}, Eq.\eqref{eq:bs-res}, and the fact that $|S_0(u)| \le \sen(f)$, we conclude that
\begin{align}\label{eq:case1bound}
	\bs(f, x)  \le \bs(f', x) + |S_0(u)| \le (C_{t-1} + 1) \cdot \sen(f') + \sen(f)
\le (C_{t-1} + 2) \cdot \sen(f),
\end{align}
which is at most $C_t \cdot \sen(f)$ by our setting of of parameter $C_t = \sum_{i=1}^{t}(i+2) = C_{t-1} + t + 2$.

\medskip
{\bf Case 2:} $x$ satisfies neither of the conditions~\ref{cond:maxterm} and~\ref{cond:minterm}.
So $f(x)$ needs to be the same with both $f(0^n)$ and $f(1^n)$, and $f$ is constant on both subcubes $\{t: t\succeq x\}$ and $\{t: t\preceq x\}$. Otherwise we can take a minimal $d$ where $d \preceq x$ and
$f(d) = f(x) \neq f(0^n)$ and by definition $d$ is a min term, 
or take the maximal $u$ where $u \succeq x$ and
$f(u) = f(x) \neq f(1^n)$ and by definition $u$ is a max term.

Fix $\ell = \bs(f, x)$ disjoint minimal sensitive blocks $\{B_1, B_2, \dots, B_\ell \}$ of $x$. For each block $B_i$, decompose it into
$B_i = U_i \cup D_i$ where $U_i = \{i \in B_i : x_i = 1\}$ and
$D_i = \{i \in B_i : x_i = 0\}$, as depicted below.
\[
x = (\underbrace{\overbrace{0\dots 0}^{D_1}\overbrace{1\dots 1}^{U_1}}_{B_1})(\underbrace{\overbrace{0\dots 0}^{D_2}\overbrace{1\dots 1}^{U_2}}_{B_2}) \cdots (\underbrace{\overbrace{0\dots 0}^{D_l}\overbrace{1\dots 1}^{U_l}}_{B_l}) 0 \dots 01 \dots 1
\]

First we will show that for each $i$, $x^{U_i}$ satisfies condition~\ref{cond:maxterm}
and $x^{D_i}$ satisfies condition~\ref{cond:minterm},
\ie there exist some max term $u\succeq x^{U_i}$ and some min term $d\preceq x^{D_i}$.
(See figure~\ref{fig:block} for an illustration.)
Indeed, for any sensitive block $B_i$ of $x$, $f(x^{B_i}) \neq f(x) = f(0^n) = f(1^n)$.
Take a \textit{maximal} $u_i$ such that $u_i \succeq x^{B_i}$ and $f(u_i) = f(x^{B_i})$.
By definition $u_i$ is a max term. Similarly we can take a min term $d_i$ where $d_i \preceq x^{B_i}$.
Then from the definition of $U_i$ and $D_i$ we can conclude that
$x^{U_i} \preceq x^{B_i} \preceq u_i$ and $x^{D_i} \succeq x^{B_i} \succeq d_i$.
Moreover, both $U_i$ and $D_i$ cannot be empty, since otherwise either
$x \preceq x^{D_i} = x^{B_i} \preceq u_i$ or $x \succeq x^{U_i} = x^{B_i} \succeq d_i$, contradicting our assumption of \textbf{case 2}. This further indicates that
$f(x) = f(x^{U_i}) = f(x^{D_i})$ as we have taken each $B_i$ to be a \textit{minimal} sensitive block.

\begin{figure}[ht]
    \centering
    \includegraphics{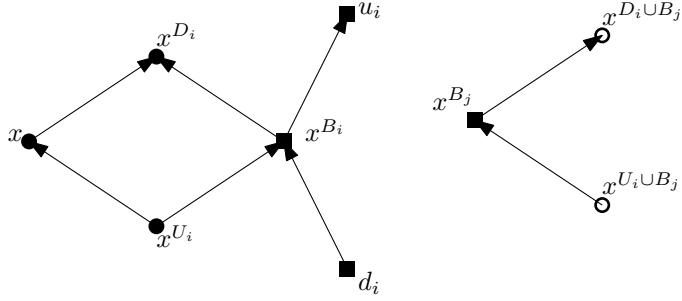}
    \caption{\label{fig:block}Order among different inputs used in the proof. Arrows indicate the partial order in $\Bn$. Solid round circles stand for one Boolean value, and squares stand for the other. The value for hollow circles are not fully determined, but we will show that most of them share the same value with the squares.}
\end{figure}

Next we are going to find some $U_i$ or $D_i$ such that
$x^{U_i}$ or $x^{D_i}$ is sensitive to most $B_i$'s. 
In this case if there are many sensitive blocks of input $x$, $x^{U_i}$ or $x^{D_i}$ must have high block sensitivity.
But we have eliminate this possibility in case 1.
To achieve this, we count the following two quantities:
\begin{itemize}
\item $\#U$ : the number of pairs $(i, j)$ such that $i \neq j$ and $f(x^{U_i}) \neq f(x^{U_i \cup B_j})$
\item $\#D$ : the number of pairs $(i, j)$ such that $i \neq j$ and $f(x^{D_i}) \neq f(x^{D_i \cup B_j})$
\end{itemize}

Recall that $f(x) = f(x^{U_i}) = f(x^{D_i})$ and $f(x) \neq f(x^{B_j})$, thus it is equivalent to counting
\begin{itemize}
\item $\#U$ : the number of pairs $(i, j)$ such that $i \neq j$ and $f(x^{B_j}) = f(x^{U_i \cup B_j})$
\item $\#D$ : the number of pairs $(i, j)$ such that $i \neq j$ and $f(x^{B_j}) = f(x^{D_i \cup B_j})$
\end{itemize}

Now we bound the number of such $i$'s for each $j$. Fix a block $B_j$, and
consider the subfunction $f^u$ on the subcube $\{z : z \succeq x^{B_j}\}$
and the subfunction $f^d$ on the subcube $\{z : z \preceq x^{B_j}\}$. Let us look at $f^u$ first. Because $D_i \cap B_j = \emptyset$ whenever $i\neq j$,
$x^{D_i \cup B_j} \succeq x^{B_j}$ which lies in the domain of $f^u$.
By the definition of certificate complexity of $f^u$ on input
$x^{B_j}$, there is a subcube $C$ of co-dimension $\C(f^u, x^{B_j})$ (with respect to $\{z : z \succeq x^{B_j}\}$) containing $x^{B_j}$, s.t. $f$ takes a constant 0/1 value on $C$. Denote by $S$ the set of coordinates in this certificate. Then $S\subseteq \{k\in [n]: (x^{B_j})_k = 0\}$ and $|S| = \C(f^u, x^{B_j})$. Now for each $D_i$, if $D_i\cap S = \emptyset$, then $f(x^{B_j}) = f(x^{D_i \cup B_j})$ as the values of the certificate variables $S$ are not flipped. As all $\{D_i\}_{i\neq j}$ are disjoint, at most $\C(f^u, x^{B_j})$ many of $D_i$'s may intersect $S$. Thus 
$f(x^{B_j}) = f(x^{D_i \cup B_j})$ for all but at most $\C(f^u, x^{B_j})$ many of $D_i$.
Similarly we can say that all
but at most $\C(f^d, x^{B_j})$ many of $U_i$'s $(i\neq j)$ satisfy that
$f(x^{B_j}) = f(x^{U_i \cup B_j})$.
Applying Lemma~\ref{thm:Cmin-alt} (statement 1), we have
\[
\C(f^u, x^{B_j}) \le \alt(f^u) \cdot \sen(f^u) \le \alt(f^u) \cdot \sen(f),
\]
\[
\C(f^d, x^{B_j}) \le \alt(f^d) \cdot \sen(f^d) \le \alt(f^d) \cdot \sen(f).
\]
Because $\alt(f^u) + \alt(f^d) \le \alt(f) = 2t$, and there are $\ell$ sensitive blocks $B_i$, thus from the second definition of $\#U$ and $\#D$ we can see that
\begin{align}\label{eq:ZeroInMatrix}
\#U + \#D \ge \ell \cdot \left((\ell - 1 - \alt(f^u) \cdot \sen(f)) + (\ell - 1 - \alt(f^d) \cdot \sen(f)) \right)
\ge \ell \cdot 2\left(\ell - 1 - t \cdot \sen(f)\right).
\end{align}

Consider a $2\ell\times \ell$ matrix $M$ of $\B$-entries as follows. The rows are indexed by $U_i$ and $D_i$, and the columns are indexed by $B_j$. For each entry $(T_i, B_j)$ where $T_i$ is $U_i$ or $D_i$, if $i=j$ then let the entry be 1; if $i\ne j$, then let it be 1 when $f(x^{B_j}) \ne f(x^{T_i \cup B_j})$ and 0 otherwise. Note that $\#U + \#D$ is exactly the number of zeros in the matrix $M$, thus the inequality Eq.\eqref{eq:ZeroInMatrix} says that the number of 1's in the matrix is at most $2\ell + 2\ell \cdot t \cdot \sen(f)$.
Since the total number of $\{U_i\}$ and $\{D_i\}$ is $2\ell$, on average each row has at most $t\cdot \sen(f) + 1$ ones. Thus there exists some row $T_i$ ($T_i$ being either $U_i$ or $D_i$) with at most $t\cdot \sen(f)$ ones on columns $B_j$ with $j\ne i$. For this row $T_i$, the number of $j$'s such that $i \neq j$ and $f(x^{T_i}) = f(x) \neq f(x^{B_j}) = f(x^{T_i \cup B_j})$ is no smaller than $\ell - 1 - t \cdot \sen(f)$. Considering that $x^{T_i}$ is also sensitive to $B_i\backslash T_i$, we conclude that
\[
\bs(f, x^{T_i}) \ge 1 + (\ell -1 - t \cdot \sen(f)) = \bs(f, x) - t \cdot \sen(f).
\]
Finally, recall that we have showed that $x^{T_i}$ satisfies one of the condition~\ref{cond:maxterm} and~\ref{cond:minterm}. Therefore $x^{T_i}$ is an input falling into \textbf{case 1}. By Eq.\eqref{eq:case1bound}, we have $\bs(f, x^{T_i}) \le (C_{t-1} + 2) \cdot \sen(f)$. Putting everything together, we have 
\[
	\bs(f, x) \le \bs(f, x^{T_i}) + t \cdot \sen(f) \le (C_{t-1} + 2 + t) \cdot \sen(f) = C_t \cdot \sen(f).
\]
This finishes the proof for $\alt(f) = 2t$.

When $\alt(f) = 2t + 1$. For any input $x$, $f(x)$ must differ from either
 $f(0^n)$ or $f(1^n)$ since $f(0^n) \neq f(1^n)$. Without loss of generality, assume that
$f(x) \neq f(0^n)$. Take the minimal $d$ such that $d \preceq x$ and
$f(d) = f(x) \neq f(0^n)$. By definition $d$ is a min term and
$x$ satisfies condition~\ref{cond:minterm}.
Then using the same analysis above as in \textbf{case 1}, we can show
$\bs(f, x) \le (C_t + 1) \cdot \sen(f)$ and this finishes the proof.
\end{proof}

\section{The Logrank Conjecture}\label{sec:logrank}
We prove Theorem \ref{thm:alt-fs} and \ref{thm:alt-mon} in this section. 
We start with Theorem \ref{thm:alt-fs}, which is now easy given Lemma \ref{thm:Cmin-alt}. Recall that the second statement of Lemma \ref{thm:Cmin-alt} says that $\max\{\C(f,0^n),\C(f,1^n)\} \le \alt(f)\cdot \deg_2(f)$, therefore 
\begin{align}\label{eq:Cmin-alt-deg2}
	\Cmin(f) \le \alt(f)\cdot \deg_2(f).
\end{align} 
As both $\alt(f)$ and $\deg_2(f)$ are downward non-increasing, applying Eq.\eqref{eq:Cmin-alt-deg2} to all subfunctions of $f$ yields $\Cminc(f)\le \alt(f)\cdot \deg_2(f)$. Since $\dt(f) \le \Cminc(f) \cdot \deg_2(f)$ (Theorem \ref{thm:dt-deg-reduction}) we get the following.
\begin{Thm}\label{thm:DT-alt}
	For any $f:\BntB$, it holds that $\dt(f) \le \alt(f) \cdot \deg_2(f)^2$.
\end{Thm}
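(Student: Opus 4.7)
The plan is to chain together three facts that are already in hand: the second statement of Lemma \ref{thm:Cmin-alt}, the downward non-increasing behavior of $\alt$ and $\deg_2$, and Theorem \ref{thm:dt-deg-reduction}. Since the chain is short, there is no real obstacle; the only thing to be careful about is passing from a bound on $\Cmin(f)$ to a bound on its closure $\Cminc(f)$, which is where the downward non-increasing properties play a role.

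Concretely, I would first recall that Lemma \ref{thm:Cmin-alt} gives
\[
\Cmin(f) \;\le\; \max\{\C(f,0^n),\C(f,1^n)\} \;\le\; \alt(f)\cdot \deg_2(f)
\]
for every $f:\BntB$. Next, since $\alt$ and $\deg_2$ are both downward non-increasing (as noted in the Preliminaries), the very same bound applies to every subfunction $f'$ of $f$, namely $\Cmin(f') \le \alt(f')\cdot \deg_2(f') \le \alt(f)\cdot \deg_2(f)$. Taking the maximum over all subfunctions $f'$ yields
\[
\Cminc(f) \;=\; \max_{f'} \Cmin(f') \;\le\; \alt(f)\cdot \deg_2(f).
\]

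Finally, I would invoke Theorem \ref{thm:dt-deg-reduction}, which gives $\dt(f)\le \Cminc(f)\cdot\deg_2(f)$. Substituting the bound on $\Cminc(f)$ just obtained yields
\[
\dt(f) \;\le\; \Cminc(f)\cdot \deg_2(f) \;\le\; \bigl(\alt(f)\cdot \deg_2(f)\bigr)\cdot \deg_2(f) \;=\; \alt(f)\cdot \deg_2(f)^2,
\]
which is exactly the claimed inequality. In this sense, the theorem is really a direct corollary of Lemma \ref{thm:Cmin-alt}(2) combined with the previously known reduction from $\dt$ to $\Cminc \cdot \deg_2$; no new combinatorial argument is required beyond noticing that the downward non-increasing property of $\alt$ and $\deg_2$ promotes the pointwise bound on $\Cmin$ to a bound on $\Cminc$.
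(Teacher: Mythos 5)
Your proof is correct and matches the paper's argument step for step: both apply Lemma~\ref{thm:Cmin-alt}(2) to bound $\Cmin(f)$, promote this to a bound on $\Cminc(f)$ via the downward non-increasing property of $\alt$ and $\deg_2$, and then invoke Theorem~\ref{thm:dt-deg-reduction}. Nothing is missing and nothing differs.
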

Theorem \ref{thm:alt-fs} follows from this together with the fact that  $\dcc(\fcxor) \le 2\dt(f)$ (Eq.\eqref{eq:cc-dt}) and that $\deg_2(f) \le \log\fn{f}{0} = \log\rank(M_{\fcxor})$ (Fact \ref{fact:deg2-fs}). 

Note that if we use the first statement of Lemma \ref{thm:Cmin-alt}, we will get the following corollary, which gives better dependence on $\alt(f)$ for low \gf-degree functions.
\begin{Cor}
	$\dt(f) \le \alt(f)\sen(f) \cdot \deg_2(f)$.
\end{Cor}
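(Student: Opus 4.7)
The plan is to mirror the proof of Theorem~\ref{thm:DT-alt} exactly, but feed in statement~1 of Lemma~\ref{thm:Cmin-alt} instead of statement~2. Statement~1 says that for any $f:\BntB$, $\max\{\C(f,0^n),\C(f,1^n)\} \le \alt(f)\cdot\sen(f)$, which in particular gives the bound $\Cmin(f) \le \alt(f)\cdot\sen(f)$.

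Next I would lift this to a bound on $\Cminc(f)$. Since both $\alt$ and $\sen$ are downward non-increasing complexity measures (as noted in the preliminaries), applying the above inequality to an arbitrary subfunction $f'$ of $f$ yields $\Cmin(f') \le \alt(f')\cdot\sen(f') \le \alt(f)\cdot\sen(f)$. Taking the maximum over all subfunctions $f'$ then gives
\[
\Cminc(f) \;\le\; \alt(f)\cdot\sen(f).
\]

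Finally I would invoke Theorem~\ref{thm:dt-deg-reduction}, which states $\dt(f) \le \Cminc(f)\cdot\deg_2(f)$, and substitute in the bound just established to conclude
\[
\dt(f) \;\le\; \Cminc(f)\cdot\deg_2(f) \;\le\; \alt(f)\cdot\sen(f)\cdot\deg_2(f),
\]
which is exactly the stated corollary. There is no real obstacle here: the only nontrivial ingredient is the downward closure step, and that follows immediately from the fact that $\alt$ and $\sen$ do not grow under restrictions, a point the paper has already spelled out. So the corollary is essentially a one-line consequence of Lemma~\ref{thm:Cmin-alt}(1) and Theorem~\ref{thm:dt-deg-reduction}, entirely parallel to how Theorem~\ref{thm:DT-alt} was obtained from Lemma~\ref{thm:Cmin-alt}(2).
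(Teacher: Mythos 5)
Your proof is correct and is precisely what the paper intends: substitute Lemma~\ref{thm:Cmin-alt}(1) for Lemma~\ref{thm:Cmin-alt}(2), lift to $\Cminc$ using downward non-increasingness of $\alt$ and $\sen$, and apply Theorem~\ref{thm:dt-deg-reduction}. The paper leaves this implicit (``if we use the first statement of Lemma~\ref{thm:Cmin-alt}, we will get the following corollary''), and you have filled in exactly the argument it had in mind.
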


\medskip Next we prove Theorem \ref{thm:alt-mon} for AND functions. Different than the above approach for XOR functions of going through $\dt(f)$, we directly argue communication complexity of AND functions. 
Recall that Theorem \ref{thm:alt-mon} says that
	\[\dcc(\fcand) \le \min\{O(\log^{a+1} \rank(M_{\fcand})), O(\log^{\frac{a+3}{2}} \rank(M_{\fcand}) \log n)\}.\]
\begin{proof}(of Theorem \ref{thm:alt-mon})
	Without loss of generality, we can assume that $f(0^n) = 0$ since otherwise we can compute $\neg f$ first and negate the answer (note that $\rank(M_{\neg f \circ \wedge})$ differs from $\rank(M_{\fcand})$ by at most 1). For notational convenience let us define $r = \mono(f) = \rank(M_{\fcand})$ and $\ell = \log r$. For $b\in \B$, further define $C_b^{(a)}$ to be the maximum $\cover_b(\fcand)$ over all functions $f:\BntB$ with alternating number $a$ and $f(0^n) = 0$. We will give three bounds for $C_b^{(a)}$ in terms of $C_b^{(a-1)}$, and combining them gives the claimed result in Theorem \ref{thm:alt-mon}.
	
	\medskip
	\textbf{Bound 1, from max terms.} We apply this bound for $C_b^{(a)}$ when $a$ and $b$ have different parities, that is, when $a$ is even and $b = 1$, and when $a$ is odd and $b = 0$. Consider the first case and the second is similar. Take any Boolean function $f$ with $f(0^n) = 0$ and $\alt(f) = a$ is even, we have $f(1^n) = 0$. Any 1-input is under some max term, so it is enough to cover inputs under max terms when bounding the $\cover_1(f)$. Take an arbitrary max term $u\in \Bn$. Suppose its Hamming weight is $s$. Considering the subfunction $f'$ on $\{t:t\succeq u\}$, which is 
	an OR function of $n-s$ variables. In the communication setting, this is the Disjointness function of $n-s$ variables. Thus $\ell = \log \rank(M_{\fcand}) \ge n-s$. This implies that all max terms $u$ of $f$ are $\ell$-close to $1^n$ in Hamming distance. Considering that different max terms are incomparable by definition, we know that the number of max terms is at most $\binom{n}{\ell}$. 
	
	Next we upper bound the 1-rectangles by giving a partition of set of 1-inputs into 1-rectangles. For each max term $u\in \Bn$, let $U = \{i\in [n]: u_i = 1\}$, and $k = n-|U|$, then $k \le \ell$. The submatrix $\{(x,y): x, y\in \Bn, x\wedge y \preceq u\}$ is partitioned into $3^k$ submatrices as follows. Suppose that the set of 0-coordinates in $u$ is $\{i_1, \ldots, i_k\}$, then for each $i_j$, we can choose  $(x_{i_j},y_{i_j})$ from the set $\{(0,0), (0,1), (1,0)\}$ to enforce $x_{i_j}\wedge y_{i_j} = 0$. Thus there are $3^k$ ways of restricting these $k$ variables in $\bar U$, giving $3^k$ submatrices.  Let $f_u:\B^U \to \B$ be the subfunction of $f$ restricted on the subcube $\{t : t \preceq u\}$ where $f_u(z_U) = f(z_U,0_{\bar U})$. (Here the input to $f$ is $x'\wedge y'$ at $U$ and 0 at $\bar U$.) Note that each of the $3^k$ submatrices is still the communication matrix of $f_u\circ \wedge$ for some max term $u$. 
	Also note that this $f_u$ has $f_u(0_U) = 0$, but $f_u(1_U) = 1$ and $\alt(f_u) \le \alt(f) - 1$. Since all the 1-inputs of $f$ are under some max term $u$, the 1-covering number $\cover^1(\fcand)$ can be upper bounded by the following:
	\[\cover_1(\fcand) \le \sum_{u: \text{max term}} 3^\ell \cdot \cover_1(f_u\circ \wedge) \le \binom{n}{\ell} \cdot 3^\ell \cdot  \max_{u: \text{max term}}\cover_1(f_u\circ \wedge).\] 
	Using the fact $\alt(f_u) \le \alt(f) - 1$, and that the above inequality holds for any $f$, we have the following bound on $C_1^{(a)}$: 
	\begin{align}\label{eq:C1-even}
		\log C_1^{(a)} \le 3\ell \cdot \log n + \log C_1^{(a-1)}, \text{ when $a$ is even}.
	\end{align}
	Similarly, when $a$ is odd, $f(1^n) = 1$, and thus any 0-input is under some max term. A similar argument shows the following bound on $C_0^{(a)}$:
	\begin{align}\label{eq:C0-odd}
		\log C_0^{(a)} \le 3\ell \cdot \log n + \log C_0^{(a-1)}, \text{ when $a$ is odd}.
	\end{align}

	\medskip
	\textbf{Bound 2, from min terms.} Take any Boolean function $f$ with $f(0^n) = 0$. Then any 1-input must be above some min term. Take any min term $d$. Let $D = \{i:d_i = 1\}$. If we restrict variables $x_i$ and $y_i$ to 1 for all $i\in D$, then we go to a rectangle $\{(x,y): x_i = y_i = 1, \forall i\in D\}$. The union of these rectangles for all min terms $d$ contains all 1-inputs. Restrict $f$ on the subcube $\{z:z\succeq d\}$ to get a subfunction $f_d$, which has $f_d(0^{\bar D}) = 1$, and $\alt(f_d) \le \alt(f) - 1$. 
	Note that for each min term $d$, we have $\alpha(d) = \sum_{x\preceq d} (-1)^{|d\oplus x|} f(x) = 1 \ne 0$ \footnote{If $f(0^n) = 1$, then for each min term $d$, we have $\alpha(d) = \sum_{x\preceq d} (-1)^{|d\oplus x|} f(x) = - 1$, which is still non-zero.}, {which contributes 1 to $\mono(f)$}, thus  
	the number of min terms is at most $\mono(f) = r$. Since each 1-input of $f$ is above some min term $d$, the 1-covering number $\cover_1(f)$ has
	\[\cover_1(\fcand) \le \sum_{d: \text{min term}} \cover_1(f_d\circ \wedge) \le r\cdot \max_{d: \text{min term}} \cover_1(f_d\circ \wedge).\]
	Note that $\alt(f_d) \le \alt(f) - 1$, and $f_d$ takes value 1 on its all-0 input, thus $\cover_1(f_d\circ \wedge) = \cover_0(\neg f_d\circ \wedge) \le C_0^{(a-1)}$ (note that the maximum in the definition of $C_0$ is over all $f$ with $f(0^n) = 0$). 
	This implies 
	\begin{align}\label{eq:C1}
		\log C_1^{(a)}\le \ell + \log C_0^{(a-1)}.
	\end{align}
	Note that this inequality holds as long as $f(0^n) = 0$, regardless of the parity of $a$. 
	
	\medskip
	\textbf{Bound 3, from CC.} When $a$ is odd, we have a bound for $C_0^{(a)}$ by Eq.\eqref{eq:C0-odd} and a bound for $C_1^{(a)}$ by Eq.\eqref{eq:C1}. When $a$ is even, we have  two bounds for $C_1^{(a)}$, Eq.\eqref{eq:C1-even} and Eq.\eqref{eq:C1}, but no bound for $C_0^{(a)}$. Note that we can always use \dcc to bound $C_0^{(a)}$: 
	\begin{align*}
		\log \cover_0(\fcand) & = \ncc_0(\fcand) \le \dcc(\fcand) \\
		& \le \log\rank(M_{\fcand})\cdot \log \cover_1(\fcand) = \ell \cdot \log \cover_1(\fcand),
	\end{align*}
	This implies that 
	\begin{align}\label{eq:C0-by-C1}
		\log C_0^{(a)} \le \ell \cdot \log C_1^{(a)}.
	\end{align} 
	Similarly it also holds that 
	$\log C_1^{(a)} \le \ell \cdot \log \C_0^{(a)}$.
		
	\medskip Now we combine the three bounds and prove the theorem by induction on $a$. In the base case of $a = 0$, the function is constant 0 and thus $C_0^{(0)} = 1$ and $C_1^{(0)} = 0$. For general $a$, we can repeatedly apply Eq.\eqref{eq:C1} and Eq.\eqref{eq:C0-by-C1} to get 
	\[\log C_1^{(a)} \le \sum_{i=1}^a \ell^i = (1+o(1)) \ell^a.\]
	Thus $\dcc(\fcand) \le \ell \cdot \log C_1^{(a)} \le (1+o(1)) \ell^{a+1}$.
	
	If we can tolerate a $\log n$ factor, then the dependence on $a$ can be made slightly better. Assume that $a$ is even, we have 
	\begin{align*}
		\log C_1^{(a)} & \le \ell +  \log C_0^{(a-1)} & \text{(by Eq.\eqref{eq:C1})} \\
		& \le \ell + 3\ell \log n + \log C_0^{(a-2)} & \text{(by Eq.\eqref{eq:C0-odd})} \\
		& \le \ell + 3\ell \log n + \ell \log C_1^{(a-2)}. & \text{(by Eq.\eqref{eq:C0-by-C1})} 
	\end{align*}
	Solving this recursion gives $\log C_1^{(a)} \le O(\ell^{\frac{a}{2}} \log n)$, and thus $\dcc = O(\ell^{\frac{a}{2}+1} \log n)$.	When $a$ is odd, we can use Eq.\eqref{eq:C1} and Eq.\eqref{eq:C0-by-C1} to reduce it to the ``even $a$'' case, resulting a bound $\dcc \le O(\ell^{\frac{a+3}{2}} \log n)$. Putting these two cases together, we get the claimed bound. 
\end{proof}


\paragraph*{Acknowledgement}
The authors would like to thank Xin Huang for valuable discussions. S.Z. was supported by Research Grants Council of the Hong Kong S.A.R. (Project no. CUHK419413). Part of the work was done when the S.Z. visited Centre of Quantum Technologies partially under its support.

\bibliographystyle{alpha}
\bibliography{alt}

\end{document}